\newtheorem{theorem}{\textbf{Theorem}}
\newtheorem{lemma}{\textbf{Lemma}}
\newtheorem{definition}{\textbf{Definition}}
\begin{document}
\title{Learning-Based Coexistence in Two-Tier Heterogeneous Networks with Cognitive Small Cells}
\author{ \authorblockN{Lin Zhang, \IEEEmembership{Student Member, IEEE}, Guodong Zhao, \IEEEmembership{Member, IEEE}, Wenli Zhou, Gang Wu, \IEEEmembership{Member, IEEE}, Ying-Chang Liang, \IEEEmembership{Fellow, IEEE}, and Shaoqian Li, \IEEEmembership{Fellow, IEEE}}
\thanks{Lin Zhang, Guodong Zhao, Wenli Zhou, Gang Wu, Ying-Chang Liang, and Shaoqian Li are with the National Key Lab of Science and Technology on Communications, University of Electronic Science and Technology of China, Chengdu, China, emails: linzhang1913@gmail.com, gdngzhao@gmail.com, di$\_$di$\_$zhou@163.com, wugang99@uestc.edu.cn, liangyc@ieee.org, and lsq@uestc.edu.cn; Guodong Zhao is also with Department of Electrical and Computer Engineering, Lehigh University, Bethlehem, PA, USA.}
}
\maketitle

\thispagestyle{empty}

\begin{abstract}
  We study the coexistence problem in a two-tier heterogeneous network (HetNet) with cognitive small cells. In particular, we consider an underlay HetNet, where the cognitive small base station (C-SBS) is allowed to use the frequency bands of the macro cell with an access probability (AP) as long as the C-SBS satisfies a preset interference probability (IP) constraint at macro users (MUs). To enhance the AP (or transmission opportunity) of the C-SBS, we propose a learning-based algorithm for the C-SBS and exploit the distance information between the macro base station (MBS) and MUs. Generally, the signal from the MBS to a specific MU contains the distance information between the MBS to the MU. We enable the C-SBS to analyze the MBS signal on a target frequency band, and learn the distance information between the MBS and the corresponding MU. With the learnt distance information, we calculate the upper bound of the probability that the C-SBS may interfere with the MU, and design an AP with a closed-form expression under the IP constraint. Numerical results indicate that the proposed algorithm outperforms the existing methods up to $60\%$ AP (or transmission opportunity).
\end{abstract}

\begin{keywords}
Access probability, cognitive small cells, interference probability, learning, underlay HetNet.
\end{keywords}

\section{Introduction}

Recent years have witnessed an explosive growth of the wireless data traffic driven by mobile devices such as smart phones and tablets \cite{5G, Lin_caching}. To accommodate the tremorous data traffic, mobile operators are pushed to increase the network capacity. However, the existing cellular architecture is designed to provide wide area coverage of users and has limited network capacity. To deal with this issue, the concept of \emph{heterogeneous network} (HetNet) is proposed by embedding the conventional macro cell with multiple small cells \cite{HetNet_1, HetNet_2, HetNet_3,Survey_Two_tire_1, Survey_Two_tire_2, Survey_Two_tire_3}. Within a HetNet, a \emph{macro base station} (MBS) is equipped in the macro cell to provide wide area coverage of users. Meanwhile, a \emph{small base station} (SBS) in each small cell is responsible for providing high data rate access for the users within its coverage. Consequently, the HetNet is able to provide both wide area coverage and high network capacity, and is a promising candidate for the next generation of wireless communications.

In the deployment of the HetNet, one of the major challenges is the coexistence issue between small cells and the macro cell. Among the previous works in \cite{Overlay_HetNet_1,Overlay_HetNet_2,Overlay_HetNet_3, Underlay_HetNet_1,Underlay_HetNet_2, Underlay_HetNet_3}, there are mainly two categories of coexistence schemes. The first one is overlay HetNet \cite{Overlay_HetNet_1,Overlay_HetNet_2,Overlay_HetNet_3}, in which frequency resource is divided into two non-overlapped groups. Then, small cells and the macro cell use orthogonal frequency bands to avoid the co-channel interference, but this is inefficient in terms of spectrum efficiency, especially under a sparse small cell deployment. The second one is underlay HetNet \cite{Underlay_HetNet_1,Underlay_HetNet_2, Underlay_HetNet_3}, in which small cells and the macro cell share the same frequency bands. Then, the underlay HetNet can provide higher spectrum efficiency compared with the overlay HetNet. Nevertheless, a centralized coordinator is expected to manage the co-channel interference between small cells and the macro cell, and raises extra cost.

To deal with the coexistence issue in the HetNet, cognitive techniques are introduced to small cells, namely, cognitive small cells, in which the SBS is equipped with the cognitive capability, namely, cognitive SBS (C-SBS). In particular, the C-SBS is able to learn the status of the frequency bands in the macro cell. Then, the C-SBS can determine whether to access a frequency band or not. There are many contributions on the HetNet with cognitive small cells, e.g., \cite{HetNet_CR_1, HetNet_CR_2, HetNet_CR_Lin, HetNet_CR_3, HetNet_CR_4, HetNet_CR_5, HetNet_CR_Guodong}. Specifically, \cite{HetNet_CR_1, HetNet_CR_2, HetNet_CR_Lin} investigated the overlay HetNet and proposed a two-phase opportunistic spectrum access scheme. With this scheme, the C-SBS conducts spectrum sensing on a target frequency band in the first phase. If the frequency band is idle, the C-SBS accesses the frequency band in the second phase. Otherwise, the C-SBS accesses another frequency band or keeps silent. Although the C-SBS and the MBS share the same frequency bands, they use these frequency bands in orthogonal time slots. Then, the co-channel interference from the C-SBS to MUs can be avoided. However, the C-SBS may have rare access opportunities when idle frequency bands are limited, for instance, the traffic load of the macro cell is heavy. \cite{HetNet_CR_3, HetNet_CR_4, HetNet_CR_5, HetNet_CR_Guodong} investigated the underlay HetNet, where the C-SBS is allowed to share the same frequency bands with the MBS in the same time slots. In particular, \cite{HetNet_CR_3, HetNet_CR_4, HetNet_CR_5} adopted the independent stationary point process to model the locations of MUs and designed variable access policies to the busy frequency bands under different criterion, e.g., outage probability, throughput, and spectrum efficiency. \cite{HetNet_CR_Guodong} developed algorithms to detect whether a specific MU is inside the C-SBS coverage and designed an \emph{access probability} (AP) to the frequency band occupied by the MU under an \emph{interference probability} (IP) constraint.

From \cite{HetNet_CR_3, HetNet_CR_4, HetNet_CR_5, HetNet_CR_Guodong}, we notice that the instantaneous location information of a MU is crucial for the C-SBS to maximize the AP (or transmission opportunity) in the underlay HetNet. In particular, with the instantaneous location information, the C-SBS can access the frequency band when the MU is outside the C-SBS coverage, or the C-SBS can access the frequency band with an AP to satisfy the IP constraint when the MU is inside the C-SBS coverage. On the contrary, without the instantaneous location information, the C-SBS has to consider the worst case and conservatively access the frequency band all the time. This inevitably compromises the AP (or transmission opportunity) of the C-SBS. However, the instantaneous location information of the MU is only available at the MBS and the MU, and is unknown at the C-SBS. Thus, it is challenging for the C-SBS to obtain the instantaneous location information of MUs and maximize the AP (or transmission opportunity).

One straightforward alternative is to utilize the stochastic location information of MUs in \cite{HetNet_CR_3, HetNet_CR_4, HetNet_CR_5} and another approach is to utilize partial instantaneous location information of MUs in   \cite{HetNet_CR_Guodong}. However, both schemes provide limited AP (or transmission opportunity) for the C-SBS. In this paper, we propose a learning-based algorithm for the C-SBS and exploit the MBS-MU distance information to enhance the AP (or transmission opportunity). To our best knowledge, this is the first work to enable the C-SBS to learn the MBS-MU distance information and design the AP for the C-SBS to realize the underlay HetNet. To highlight our contributions, we summarize this paper as follows:

{$\bullet$} We propose a learning-based method for the C-SBS and exploit the MBS-MU distance information to enhance the AP (transmission opportunity) in the underlay HetNet.

{$\bullet$} By enabling the C-SBS to analyze the MBS signal, we learn the MBS-MU distance information, i.e., the relation between the MBS-MU distance and the received \emph{signal-to-noise ratio} (SNR) of the MBS signal at the C-SBS.

{$\bullet$} Based on the learnt MBS-MU distance information, we calculate the upper bound of the probability that the C-SBS may interfere with the MU. Then, we design an AP for the C-SBS with a closed-form expression under the IP constraint at the MU.

{$\bullet$} We numerically verify that the proposed algorithm outperforms the existing methods up to $60\%$ AP (or transmission opportunity).

The rest of the paper is organized as follows: Section II describes the system model in this paper. Section III develops algorithms for the C-SBS to learn the MBS-MU distance information. In Section IV, we exploit the learnt MBS-MU distance information and design an AP for the C-SBS u                                                                           nder the IP constraint. Section V provides numerical results and compares the proposed algorithm with the state of arts. Finally, Section VI concludes the paper.

\section{System Model}
We consider a HetNet model in Fig. {\ref{systemmodel}}, which consists of a macro cell with radius $R$ and a cognitive small cell with radius $r$. In particular, MUs are uniformly distributed in the macro cell and the MBS serves each MU with a certain frequency band. Meanwhile, the C-SBS is in the coverage of the MBS and is allowed to access the frequency band being occupied by a MU for coexistence. We denote the distance from the MBS to a MU and the C-SBS as $d_0$ and $d_1$, respectively. In what follows, we provide the channel model and the signal model, respectively.

         \begin{figure}[t!]
            \centering
            \includegraphics[scale=0.6]{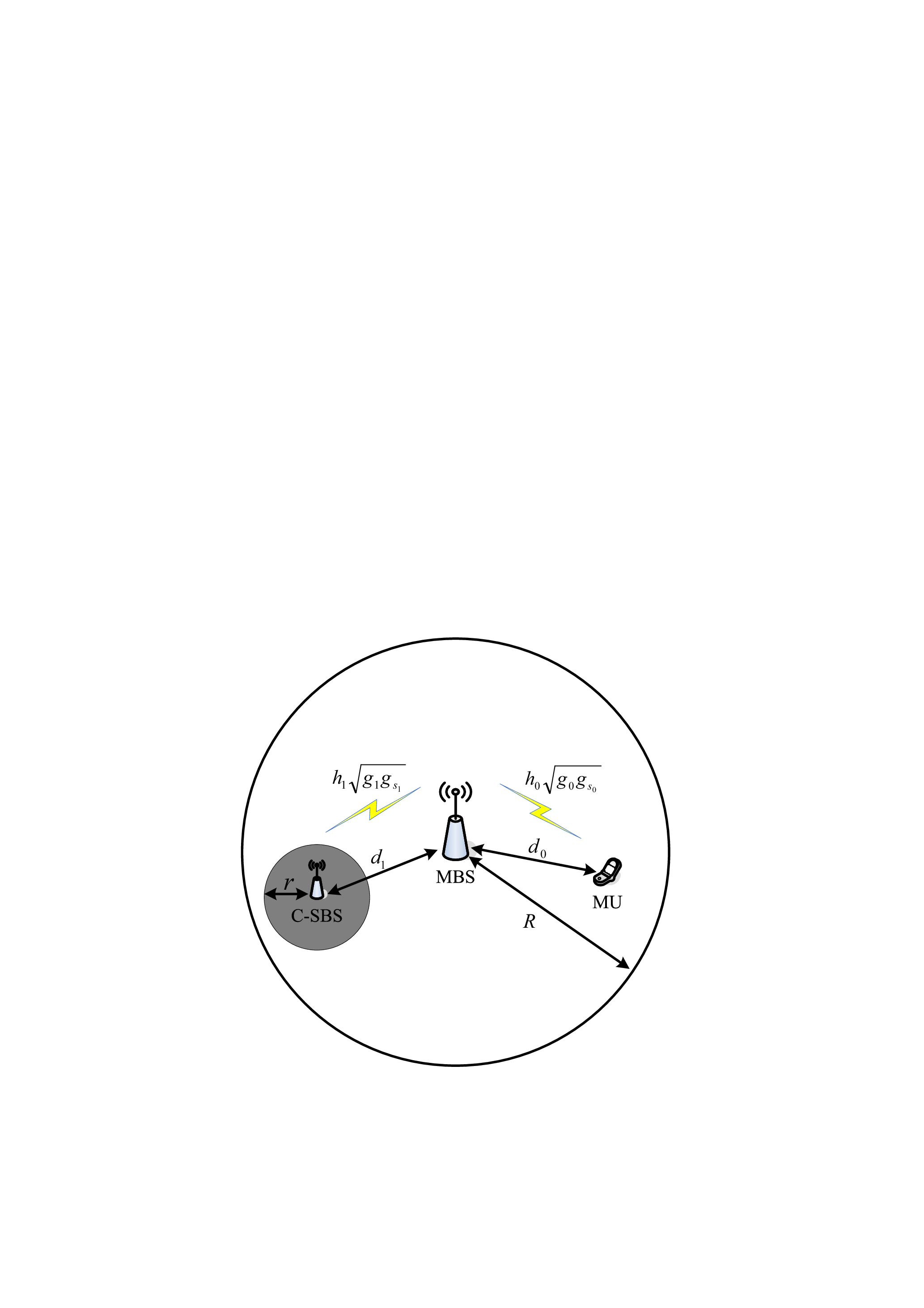}
            \caption{HetNet model, which consists of a macro cell with radius $R$ and a cognitive small cell with radius $r$. In particular, MUs are uniformly distributed in the macro and the MBS serves each MU with a certain frequency band. Meanwhile, the C-SBS is in the coverage of the MBS and is allowed to access the frequency band being occupied by a MU for coexistence. }
            \label{systemmodel}
        \end{figure}

\subsection{Channel Model}

We denote $h_0$ ($h_1$), $g_{s_0}$ ($g_{s_1}$), and $g_0$ ($g_1$) as the fading, the shadowing, and the path-loss coefficients between the MBS and MU (C-SBS), respectively. Then, the channel between the MBS and the MU (C-SBS) is ${h_0}\sqrt {{g_0}{g_{s_0}}}$ (${h_1}\sqrt {{g_1}{g_{s_1}}}$). In particular, $|h_q|$ ($q=0, \ 1$) follows a Rayleigh distribution with unit mean. $g_{s_q}$ ($q=0, \ 1$) follows a log-normal distribution with standard derivation $\sigma_s$. $g_q$ ($q=0, \ 1$) is determined by the path-loss model. If we adopt the path-loss model\footnote{Although we adopt the path loss model (\ref{Path_loss_model}) in this paper, our proposed algorithms can be used in other path loss models. }  \cite{3GPP_channel_model}
\begin{equation}
    P_l(d_q)=128 + 37.6\log_{10}(d_q), \ \ \text{for} \ \ d_q \geq
    \xi,
    \label{Path_loss_model}
\end{equation}
where $\xi=0.035$ km is the minimum distance between a transmitter and a receiver, $g_q$ can be expressed as
\begin{equation}
    g_q=10^{-12.8} d_q^{-3.76}, \ \ \text{for} \ \ d_q \geq \xi.
    \label{g_i}
\end{equation}

         \begin{figure}[t!]
            \centering
            \includegraphics[scale=0.42]{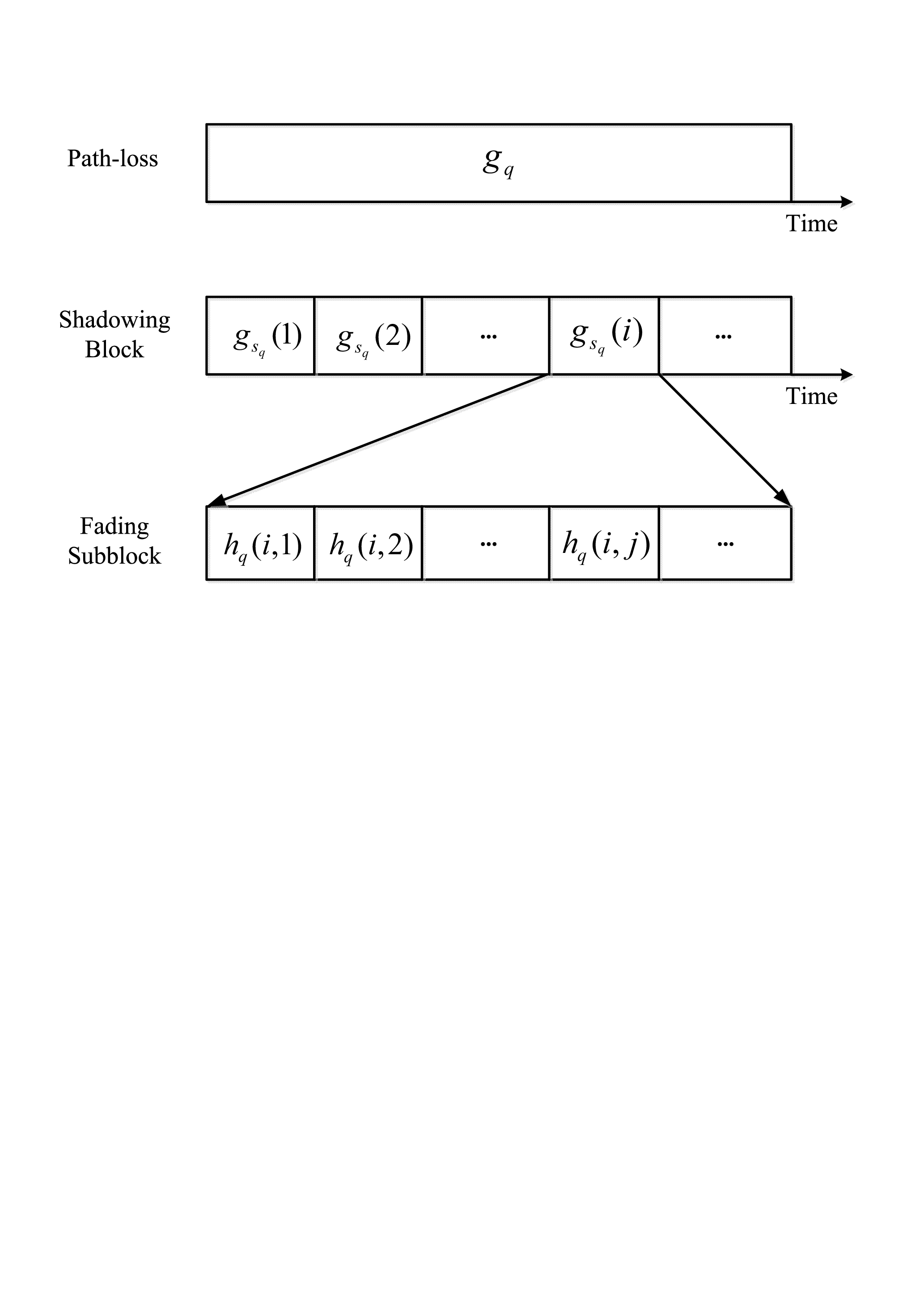}
            \caption{Channel model, where time axis is divided into blocks and each blocks consists of multiple subblocks. In particular, $g_q$ remains constant all the time with a given $d_q$, $g_{s_q}$ ($q=0, \ 1$) remains constant within each block ($i$) and varies independently among different blocks, and $h_q$ ($q=0, \ 1$) remains constant within each subblock ($i, j$) and varies among different subblocks.}
            \label{Channel_model}
        \end{figure}
For illustrations, we provide the channel model in Fig. \ref{Channel_model}, where time axis is divided into blocks and each block consists of multiple subblocks. In particular, $g_q$ remains constant all the time for a given $d_q$, $g_{s_q}$ ($q=0, \ 1$) remains constant within each block ($i$) and varies independently among different blocks, and $h_q$ ($q=0, \ 1$) remains constant within each subblock ($i, j$) and varies among different subblocks.

\subsection{Signal Model}

\subsubsection{Signal model from the MBS to the MU}
If we denote $x_{0}$ as the MBS signal with unit power, i.e., $|x_{0}|^2=1$, and denote $p_0$ as the transmit power of the MBS, the received signal at the MU is
 \begin{equation}
 y_0(i,j)= h_0(i,j)\sqrt {g_0g_{s_0}(i)p_0(i,j)} x_0(i,j) + n_0(i,j),
 \label{y_p}
 \end{equation}
where $(i)$ denotes the index of the $i$th block, $(i,j)$ represents the index of the $j$th subblock in the $i$th block,  $n_0$ represents the AWGN at the MU with zero mean and variance $\sigma^2$. Then, the SNR of the received signal at the MU is
 \begin{equation}
\gamma_{0}(i,j)= \frac{|h_0(i,j)|^2g_0g_{s_0}(i)p_0(i,j)}{\sigma^2}.
 \label{gamma_p}
 \end{equation}

We further consider that the MBS and the MU adopt \emph{close loop
power control} (CLPC) to provide \emph{quality of service} (QoS) guaranteed wireless communication \cite{Underlay_HetNet_4,Rui}. That is, the MBS automatically adjusts its transmit power to meet a certain target SNR $\gamma_T$ at the MU. Then, the transmit power of the MBS is
\begin{equation}
p_0(i,j)= \frac{{\gamma _T\sigma^2}}{{|h_0(i,j)|^2{g_0g_{s_0}(i)}}}.
\label{p_0}
\end{equation}

\subsubsection{Signal model from the MBS to the C-SBS}
In the meantime, the received MBS signal at the C-SBS is
  \begin{equation}
y_1(i,j) = h_1(i,j)\sqrt {{g_1}{g_{s_1}(i)}{p_0(i,j)}} {x_{0}}(i,j) + n_1(i,j),
 \label{y_c}
 \end{equation}
where $n_1$ is the AWGN at the C-SBS with zero mean and variance $\sigma^2$. Then, the SNR of the received MBS signal at the C-SBS is
 \begin{equation}
\gamma_1(i,j) = \frac{{|h_1(i,j)|^2{g_1}{g_{s_1}(i)}{p_{0}(i,j)}}}{{\sigma^2}}.
 \label{gamma_c}
 \end{equation}

By Substituting (\ref{g_i}) and (\ref{p_0}) into (\ref{gamma_c}), $\gamma_1(i,j)$ in (\ref{gamma_c}) can be rewritten as
\begin{equation}
{\gamma _{1}}(i,j) = \frac{{{\gamma _T}{d_1^{-3.76}}}}{{{d_0^{-3.76}}}}\frac{g_{s_1}(i)}{g_{s_0}(i)} \frac{{|h_1(i,j)|^2}}{{|h_0(i,j)|^2}}.
\label{gamma_c_1}
\end{equation}

\section{MBS-MU Distance Information Learning}
In this section, we develop an algorithm for the C-SBS to learn the MBS-MU distance information. In principle, when the MBS is transmitting signals to the MU with a target SNR, the transmission is based on the MBS-MU distance. In particular, if the MBS-MU distance is small, the MBS is able to satisfy the target SNR with a small transmit power. This leads to a small measured SNR at the C-SBS. Otherwise, the MBS has to increase its transmit power to achieve the target SNR. This results in a large measured SNR at the C-SBS. Therefore, the measured SNR at the C-SBS contains the MBS-MU distance information. In what follows, we will first calculate the \emph{cumulative density function} (CDF) of the measured SNR, and then obtain the relation between the MBS-MU distance and the measured SNR.

\subsection{CDF of the SNR at the C-SBS}
Removing the time index of the SNR at the C-SBS in (\ref{gamma_c_1}) and rewriting the SNR in dB, we have
\begin{align}
{\gamma _{1,dB}} = {\gamma _{T,dB}} + 37.6{\log _{10}}\left( \frac{d_0}{d_1} \right)+ \theta _r+\theta _s,
 \label{gamma_c_dB}
\end{align}
where ${\theta _r} = 10{\log _{10}}\left( |h_1|^2/|h_0|^2 \right)$ and ${\theta _s} = 10{\log _{10}}\left( {{g_{s_1}}} \right) - 10{\log _{10}}\left( {{g_{s_0}}} \right)$.

Then, the CDF of $\gamma_{1, dB}$ can be expressed as
\begin{align}
\label{gamma_c_dB_CDF_1}
\nonumber
& F_{\Gamma _{1,dB}}(\gamma_{1, dB}) \\ \nonumber
= &\Pr\left\{{\gamma _{T,dB}} + 37.6{\log _{10}}\left( \frac{d_0}{d_1} \right)+ {\theta _r} + {\theta _s}\leq \gamma_{1, dB}\right\}\\
= &\Pr\left\{{\theta _r}\! + {\theta _s}\leq \gamma_{1, dB}\!-{\gamma _{T,dB}}\!- 37.6{\log _{10}}\left( \frac{d_0}{d_1} \right)\right\},
\end{align}
which is related to the distributions of both $\theta _r$ and $\theta _s$. In the following, we first calculate the \emph{probability density function} (PDF) of both $\theta _r$ and $\theta _s$, and then obtain the CDF of $\gamma_{1, dB}$ with (\ref{gamma_c_dB_CDF_1}).

\subsubsection{PDF of $\theta _r$} Since both $|h_0|$ and $|h_1|$ follow a Rayleigh distribution with unit mean, the CDF of $\phi=|h_1|^2/|h_0|^2$ is \cite{Joint_PDF}
\begin{align}
F_{\Phi}(\phi) = \frac{\phi}{1 + \phi}.
\label{phi_cdf}
\end{align}

Then, the CDF of $\theta _r=10\log_{10}(\phi)$ is
\begin{align} \nonumber
F_{\Theta _r}(\theta _r) =& \Pr\left\{10\log_{10}(\phi) \leq \theta _r \right\}\\ \nonumber
=& \Pr\left\{\phi \leq 10^{\frac{\theta _r}{10}} \right\}\\ \nonumber
=& F_{\Phi}\left( 10^{\frac{\theta _r}{10}}\right)\\
=& \frac{1}{1 + 10^{-\frac{\theta _r}{10}}}.
\label{Theta_r_cdf}
\end{align}


Taking the derivation of $F_{\Theta _r}(\theta _r)$, we have the PDF of $\theta _r$ as
\begin{equation}\label{a5}
{f_{{\Theta _r}}}\left( {{\theta _r}} \right) = \frac{{\ln 10 \cdot {{10}^{ - \frac{{{\theta _r}}}{{10}}}}}}{{10{{\left( {1 + {{10}^{ - \frac{{{\theta _r}}}{{10}}}}} \right)}^2}}}.
\end{equation}

\subsubsection{PDF of $\theta _s$} Recall that both $g_{s_0}$ and $g_{s_1}$ follow a log-normal distribution with standard derivation $\sigma_s$. Then, it is straightforward to obtain that $\theta_s$ follows a normal distribution with zero means and variance $2\sigma_s^2$. Thus, the PDF of $\theta_s$ is
\begin{equation}
\label{Theta_2_pdf}
{f_{{\Theta _s}}}\left( {{\theta _s}} \right) = \frac{1}{{\sqrt {4\pi   \sigma _s^2 } }}{e^{ - \frac{{{{ {{\theta _s}} }^2}}}{{4 {\sigma _s^2}}}}}.
\end{equation}

Based on (\ref{a5}) and (\ref{Theta_2_pdf}), the CDF of $\gamma_{1, dB}$ in (\ref{gamma_c_dB_CDF_1}) can be calculated as
\begin{align}
\label{gamma_c_dB_CDF_2}
\nonumber
 F_{\Gamma _{1,dB}}(\gamma_{1, dB})=& \Pr\left\{{\theta _r} + {\theta _s}\leq m(\gamma_{1, dB})\right\}\\ \nonumber
=&\int_{ - \infty }^\infty  \int_{ - \infty }^{m(\gamma_{1, dB}) - {\theta _s}}\!\!f_{\Theta _s}\left(\theta _s \right) \!\!f_{\Theta _r}\left( \theta _r \right)d\theta _rd\theta _s\\
\!=\!&\int_{ - \infty }^\infty \!\! f_{\Theta _s}\left(\theta _s \right) F_{\Theta _r}\left(m(\gamma_{1, dB})\! - \! {\theta _s} \!\right)d\theta _s,
\end{align}
where $m(\gamma_{1, dB})=\gamma_{1, dB}-{\gamma _{T,dB}}- 37.6{\log _{10}}\left( \frac{d_0}{d_1}\right)$.

\subsection{Relation between $d_0$ and $\gamma_{1,dB}$}

To begin with, we provide the following definition.

\begin{definition} For a random variable $X$ with CDF $F_X(x)$, $x\in \mathbb{R}$, if $x_{\frac{1}{2}}$ satisfies both $F_X(x_{\frac{1}{2}})= \Pr\{X\leq x_{\frac{1}{2}}\}=\frac{1}{2}$ and $1-F_X(x_{\frac{1}{2}})= \Pr\{X\geq x_{\frac{1}{2}}\}=\frac{1}{2}$, $x_{\frac{1}{2}}$ is defined as the median of the random variable $X$.
\end{definition}

If we denote the median of the random variable $\gamma_{1,dB}$ as $\gamma_{1,dB,\frac{1}{2}}$, we have the following lemma.
\begin{lemma}
The median of the random variable $\gamma_{1,dB}$ is $\gamma_{1,dB,\frac{1}{2}}={\gamma _{T,dB}}+ 37.6{\log _{10}}\left( \frac{d_0}{d_1}\right)$.
\end{lemma}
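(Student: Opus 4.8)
The plan is to exploit the fact that, by (\ref{gamma_c_dB}), the random variable $\gamma_{1,dB}$ differs from the candidate median $c := \gamma_{T,dB} + 37.6\log_{10}(d_0/d_1)$ only by the additive term $\theta_r + \theta_s$, where $c$ is a deterministic constant once $d_0$ and $d_1$ are fixed. Since the median is shift-equivariant, it suffices to prove that the median of $Z := \theta_r + \theta_s$ is zero, i.e.\ that $\Pr\{Z \le 0\} = \Pr\{Z \ge 0\} = \tfrac12$. I would establish this by showing that $Z$ is symmetric about the origin.

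First I would argue that each summand is individually symmetric about zero. For $\theta_s$ this is immediate: by (\ref{Theta_2_pdf}) it is a zero-mean Gaussian, so $f_{\Theta_s}(-\theta_s) = f_{\Theta_s}(\theta_s)$. The substantive step is the symmetry of $\theta_r$. The cleanest route is distributional: since $|h_0|$ and $|h_1|$ are independent and identically distributed, the ratio $\phi = |h_1|^2/|h_0|^2$ satisfies $\phi \overset{d}{=} 1/\phi$, hence $\theta_r = 10\log_{10}\phi$ and $-\theta_r = 10\log_{10}(1/\phi)$ have the same law, which is exactly symmetry about zero. Alternatively, one can read it directly off (\ref{Theta_r_cdf}): a short computation gives $F_{\Theta_r}(-\theta_r) = 1 - F_{\Theta_r}(\theta_r)$ (equivalently $F_{\Theta_r}(0) = \tfrac12$ together with the evenness of the density in (\ref{a5})), which is the CDF characterization of symmetry about the origin.

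Next I would combine the two. Because $\theta_r$ (fading) and $\theta_s$ (shadowing) are independent and each symmetric about zero, their convolution $Z$ is again symmetric about zero. Both variables possess densities, so $Z$ is continuous and $\Pr\{Z = 0\} = 0$; symmetry then forces $\Pr\{Z \le 0\} = \Pr\{Z \ge 0\}$, and since these probabilities sum to $1$ each equals $\tfrac12$. Thus $Z$ has median $0$, and adding back the constant $c$ yields median $\gamma_{T,dB} + 37.6\log_{10}(d_0/d_1)$ for $\gamma_{1,dB}$, verifying both conditions in the definition of the median.

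An alternative that stays closer to the machinery already developed is to evaluate (\ref{gamma_c_dB_CDF_2}) at $\gamma_{1,dB} = c$, where $m(c) = 0$, giving $F_{\Gamma_{1,dB}}(c) = \int_{-\infty}^{\infty} f_{\Theta_s}(\theta_s) F_{\Theta_r}(-\theta_s)\, d\theta_s$; substituting $\theta_s \mapsto -\theta_s$ and using the evenness of $f_{\Theta_s}$ shows this integral equals $\int_{-\infty}^{\infty} f_{\Theta_s}(\theta_s) F_{\Theta_r}(\theta_s)\, d\theta_s$, while the identity $F_{\Theta_r}(-\theta_s) = 1 - F_{\Theta_r}(\theta_s)$ shows it equals one minus itself, forcing the value $\tfrac12$. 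Either way, the only nonroutine point, and the step I would treat most carefully, is establishing the symmetry of the fading term $\theta_r$ about zero; everything after that is bookkeeping.
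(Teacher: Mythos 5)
Your argument is correct, and your primary route is genuinely different from the paper's. The paper proves the lemma by direct evaluation: it plugs $\gamma_{1,dB}=c$ into the CDF integral (\ref{gamma_c_dB_CDF_2}) (so that $m(c)=0$), splits the integral at $\theta_s=0$, substitutes $\theta_s\mapsto-\theta_s$ on the negative half, and uses the evenness of $f_{\Theta_s}$ together with the identity $\frac{1}{1+10^{-\theta_s/10}}=1-\frac{1}{1+10^{\theta_s/10}}$ to collapse the sum to $\int_0^\infty f_{\Theta_s}(\theta_s)\,d\theta_s=\frac12$. You instead isolate the structural reason the computation works: $\gamma_{1,dB}=c+\theta_r+\theta_s$ with $\theta_r$ and $\theta_s$ independent and each symmetric about zero (for $\theta_r$ via $\phi\overset{d}{=}1/\phi$, equivalently $F_{\Theta_r}(-\theta)=1-F_{\Theta_r}(\theta)$, which indeed follows from (\ref{Theta_r_cdf})), so their sum is symmetric, continuous, and hence has median zero; shift-equivariance of the median finishes it. Your approach buys generality and transparency --- it makes clear that the result has nothing to do with the specific Rayleigh/log-normal forms, only with the i.i.d.\ fading on the two links and the zero-mean shadowing difference, and it verifies both halves of Definition~1 cleanly via $\Pr\{Z=0\}=0$. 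The paper's calculation buys self-containedness within the machinery of (\ref{gamma_c_dB_CDF_2}), and your ``alternative'' paragraph (the $I=1-I$ fixed-point trick) is essentially that same calculation, organized slightly more economically than the paper's split at zero. One small point worth making explicit in either route is the independence of $\theta_r$ and $\theta_s$, which the paper also uses silently when it factors the joint density in (\ref{gamma_c_dB_CDF_2}).
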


\begin{proof}
The detailed proof of this Lemma is provided in Appendix A.
\end{proof}

From Definition 1 and Lemma 1, the probability that each sample of $\gamma_{1,dB}$ is smaller or larger than $\gamma_{1,dB,\frac{1}{2}}$ is equal to $\frac{1}{2}$. Suppose that the C-SBS observes MBS signals in $I$ blocks and measures $\gamma_{1, dB}$ of $J$ subblocks within each block. Then, the C-SBS can obtain $K=IJ$ independent samples of $\gamma_{1, dB}$, namely, $\gamma _{1, dB}(i,j)$ ($1\leq i\leq I$, $1\leq j\leq J$). By resorting these $K$ samples in an ascending order, we relabel these samples as $\bar{\gamma}_{1, dB}(k)$ ($1\leq k\leq K$). Then, we have the relation between the MBS-MU distance and the received SNR $\gamma_{1, dB}$ (or $\bar \gamma_{1, dB}$) at the C-SBS in the following Theorem.

\begin{theorem}
Define a function $f(x)=d_110^{\frac{x-\gamma _{T,dB}}{37.6}}$. Then, for a given distance $d_0^e$ from the MBS, the probability that the distance $d_0$ is larger than $d_0^e$ follows:

$\bullet$ If $d_0^{e} < f\left(\bar \gamma _{1,dB}(1)\right)$, we have
\begin{equation}
1- \left(1/2\right)^{K} < \Pr\left\{d_0^e < d_0\right\} <1.
\label{Th2_bound1}
\end{equation}

$\bullet$ If $d_0^{e} \geq f\left(\bar \gamma _{1,dB}(K)\right)$, we have
\begin{equation}
0 < \Pr\left\{d_0^e \leq d_0\right\}< \left(\frac{1}{2}\right)^{K}.
\label{Th2_bound2}
\end{equation}

$\bullet$ If $f\left(\bar \gamma _{1,dB}(k')\right) \leq d_0^e < f\left(\bar \gamma _{1,dB}(k'+1)\right)$, where $1\leq k' \leq K-1$, we have
\begin{align} \nonumber
&\left(\frac{1}{2}\right)^K\left(C_K^{k'+1}+C_K^{k'+2}+\cdots+C_K^{K}\right) \leq  \Pr\left\{d_0^e \leq d_0 \right\}\\
 \leq & \left(\frac{1}{2}\right)^K\left(C_K^{k'}+C_K^{k'+1}+\cdots+C_K^{K}\right).
 \label{Th2_bound3}
\end{align}

\end{theorem}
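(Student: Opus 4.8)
The plan is to reduce the statement about distances to a statement about the median of the SNR samples and then invoke the elementary link between order statistics and the binomial distribution. First I would record the algebraic fact that $f$ is strictly increasing and, by Lemma 1, satisfies $f\!\left(\gamma_{1,dB,\frac12}\right)=d_1 10^{\log_{10}(d_0/d_1)}=d_0$ (the exponent collapses because the $37.6$ in $\gamma_{1,dB,\frac12}$ cancels the $37.6$ in $f$). Writing $t=f^{-1}(d_0^e)$, the events then translate exactly: since $f$ preserves order, $\{d_0^e\le d_0\}=\{t\le \gamma_{1,dB,\frac12}\}$, and for each sample $f(\gamma_{1,dB}(i,j))<d_0 \Leftrightarrow \gamma_{1,dB}(i,j)<\gamma_{1,dB,\frac12}$. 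This turns the claim into a comparison between the fixed threshold $t$ and the median of the distribution generating the $K$ i.i.d.\ samples.

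The second step is to introduce the count $S=\#\{(i,j):\gamma_{1,dB}(i,j)<\gamma_{1,dB,\frac12}\}$. By Definition 1 and Lemma 1 each sample lies below the median with probability exactly $\tfrac12$, and since the $K=IJ$ samples are independent, $S\sim\mathrm{Binomial}(K,\tfrac12)$; in particular $\Pr\{S\ge k\}=(1/2)^{K}\sum_{m=k}^{K}C_K^{m}$. I would then use the order-statistic identity $\{\gamma_{1,dB,\frac12}>\bar\gamma_{1,dB}(k)\}=\{S\ge k\}$ (the median exceeds the $k$-th smallest sample iff at least $k$ samples fall below it), noting that continuity makes the median coincide with no sample almost surely, so strict and non-strict inequalities may be interchanged freely.

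The main step is the sandwich for Case 3. Given $f(\bar\gamma_{1,dB}(k'))\le d_0^e<f(\bar\gamma_{1,dB}(k'+1))$, i.e.\ $\bar\gamma_{1,dB}(k')\le t<\bar\gamma_{1,dB}(k'+1)$, I would establish
\[
\{S\ge k'+1\}\subseteq\{t\le\gamma_{1,dB,\frac12}\}\subseteq\{S\ge k'\},
\]
reading both inclusions off the ordering: if at least $k'+1$ samples are below the median then $\bar\gamma_{1,dB}(k'+1)<\gamma_{1,dB,\frac12}$, and since $t<\bar\gamma_{1,dB}(k'+1)$ this forces $t<\gamma_{1,dB,\frac12}$; conversely $t\le\gamma_{1,dB,\frac12}$ together with $\bar\gamma_{1,dB}(k')\le t$ gives $\bar\gamma_{1,dB}(k')\le\gamma_{1,dB,\frac12}$, i.e.\ $S\ge k'$. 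Taking probabilities and substituting the binomial tails yields (\ref{Th2_bound3}). Cases (\ref{Th2_bound1}) and (\ref{Th2_bound2}) are the degenerate ranks $k'=0$ and $k'=K$: for $k'=0$ the upper inclusion becomes the whole sample space (bound $1$), and for $k'=K$ the lower inclusion becomes empty (bound $0$), with the strict inequalities arising because the ambiguous middle event $\{S=k'\}$ has positive probability and the separation $d_0^e<f(\bar\gamma_{1,dB}(1))$ (resp.\ $d_0^e\ge f(\bar\gamma_{1,dB}(K))$) is strict.

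The place I expect to need most care is the probabilistic interpretation underlying the sandwich: $d_0$, $d_0^e$ and $\gamma_{1,dB,\frac12}$ are all deterministic, so the inclusions above are genuinely statements about the random count $S$ conditioned on the observed rank of $d_0^e$ among the sorted samples, and it is $S$ that supplies the binomial randomness. I would therefore state explicitly that $\Pr\{d_0^e\le d_0\}$ is the coverage (confidence) level attached to this rank-based estimate, and check that the sandwich is consistent as $k'$ runs through its boundary values so that the three displayed cases glue together. The remaining bookkeeping—interchanging $<$ and $\le$ via continuity of the SNR distribution—is routine.
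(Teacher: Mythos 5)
Your proposal is correct and follows essentially the same route as the paper's Appendix B: both arguments reduce the distance events to the event that a prescribed number of the $K$ i.i.d.\ SNR samples fall below the median $\gamma_{T,dB}+37.6\log_{10}\left(\frac{d_0}{d_1}\right)$ (your count $S$), evaluate the resulting $\mathrm{Binomial}(K,\tfrac{1}{2})$ tail probabilities $\left(\frac{1}{2}\right)^K\left(C_K^{k'}+\cdots+C_K^{K}\right)$, and sandwich $d_0^e$ between consecutive order statistics via the monotonicity of $f$. Your explicit remark that the randomness resides entirely in the samples, so that $\Pr\{d_0^e\leq d_0\}$ is really a rank-based coverage statement, makes precise a point the paper leaves implicit but does not alter the argument.
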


\begin{proof}
The detailed proof of this Theorem is provided in Appendix B.
\end{proof}

Theorem 1 provides both the upper bound and the lower bound of the probability that the MBS-MU distance is in a certain region. With this theorem, we will design the AP of the C-SBS and satisfy the IP constraint at the MU in the next section.

\section{Coexistence between the Cognitive Small Cell and the Macro cell}
In this section, we utilize the learnt MBS-MU distance information, i.e., relation between $d_0$ and $\bar \gamma_{1,dB}$, to design the AP for the coexistence between the cognitive small cell and the macro cell. We notice that the AP design may be different for distinct locations of the C-SBS within the macro cell. Thus, we will first classify the locations of the C-SBS into three scenarios. Then, we design the AP of the C-SBS in each scenario.

\subsection{Classification of Distinct Scenarios}

             \begin{figure}[t!]
            \centering
            \includegraphics[scale=0.45]{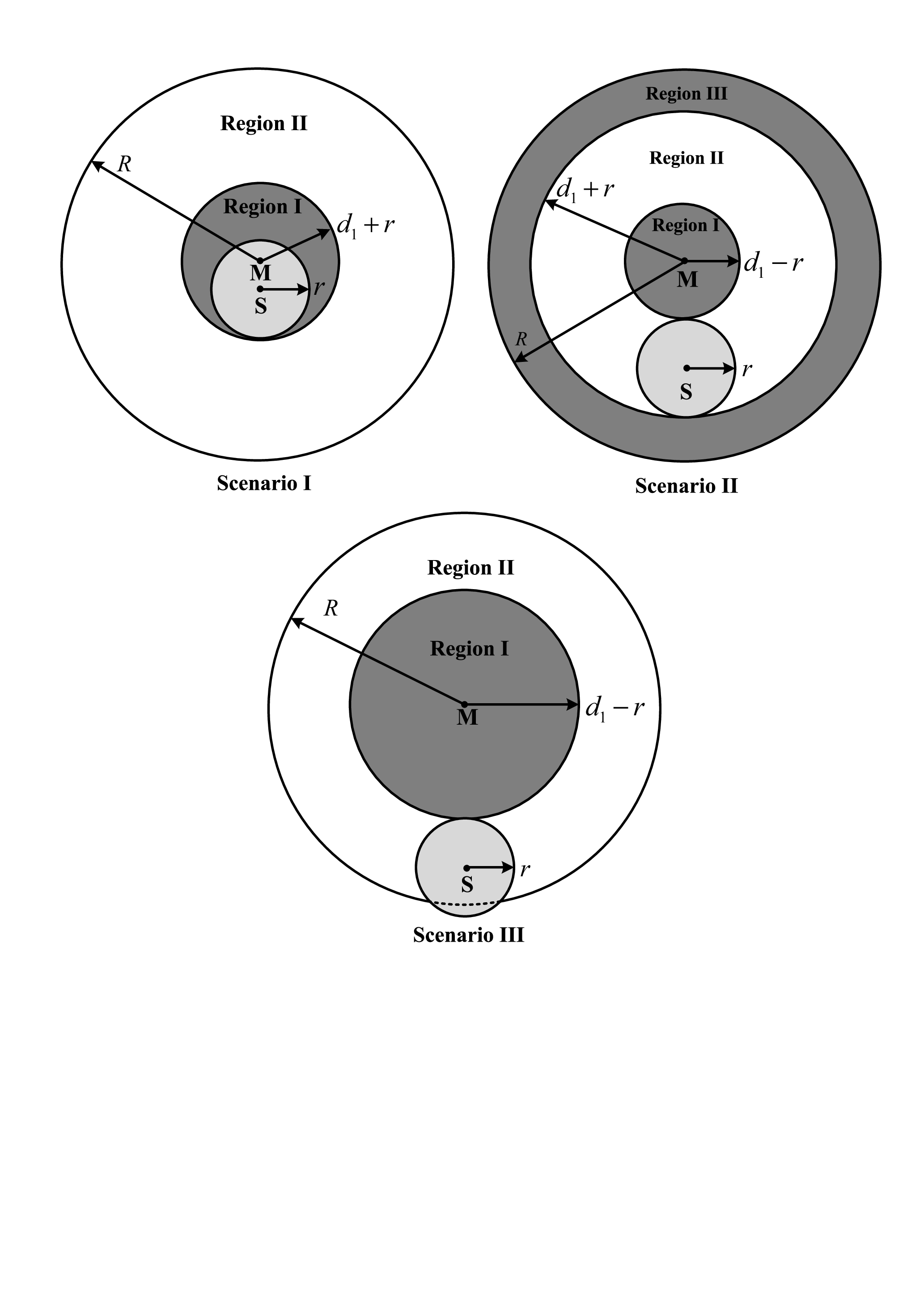}
            \caption{Illustrations of three scenarios, in which the C-SBS is in different locations within the macro cell.}
            \label{Position_interference_model}
        \end{figure}

Based on the MBS-C-SBS distance $d_1$, we consider three scenarios as shown in Fig. \ref{Position_interference_model}, where points ``M'' and ``S'' denote the MBS and the C-SBS, respectively.

{$\bullet$} In Scenario I, the MBS-C-SBS distance $d_1$ is small, i.e., $\xi \leq d_1\leq r+\xi$. Then, we divide the macro cell into two non-overlapped regions\footnote{Since the minimum distance between a transmitter and a receiver is $\xi>0$, we only consider the regions satisfying the minimum distance.}, i.e., Region I and Region II. In particular, Region I denotes the region where the distance from the MBS is between $\xi$ and $d_1+r$, and Region II denotes the region where the distance from the MBS is between $d_1+r$ and $R$.

{$\bullet$} In Scenario II, the MBS-C-SBS distance $d_1$ is medium, i.e., $r+\xi< d_1<R-r$. Then, we divide the macro cell into three non-overlapped regions, i.e., Region I, Region II, and Region III. In particular, Region I denotes the region where the distance from the MBS is between $\xi$ and $d_1-r$, and Region II denotes the region where the distance from the MBS is between $d_1-r$ and $d_1+r$, and Region III denotes the region where the distance from the MBS is between $d_1+r$ and $R$.

{$\bullet$} In Scenario III, the MBS-C-SBS distance $d_1$ is large, i.e., $R-r\leq d_1\leq R+r$. Then, we divide the macro cell into two non-overlapped regions, i.e., Region I and Region II. In particular, Region I denotes the region where the distance from the MBS is between $\xi$ and $d_1-r$, and Region II denotes the region where the distance from the MBS is between $d_1-r$ and $R$.

\subsection{AP Design in Scenario I}
In this part, we exploit the results in Theorem 1 to design the AP of the C-SBS and satisfy the IP constraint at the MU in Scenario I, where the macro cell is divided into two regions. In this scenario, the C-SBS may cause interference to the MU if the MU is located in Region I, i.e., $\xi \leq d_0\leq d_1+r$. According to the relation between the measured SNRs at the C-SBS and $\xi$, $\xi$ may satisfy
\begin{align}
0< \xi< f\left(\bar \gamma _{1,dB}(1)\right),
\label{I_k_0}
\end{align}
or
\begin{align}
f\left(\bar \gamma _{1,dB}(k_0)\right) \leq \xi < f\left(\bar \gamma _{1,dB}(k_0+1)\right),
\label{I_k_0_1}
\end{align}
for $1\leq k_0\leq K-1$, or
\begin{align}
f\left(\bar \gamma _{1,dB}(K)\right) \leq  \xi.
\label{I_k_0_2}
\end{align}

Meanwhile, according to the relation between the measured SNRs at the C-SBS and $d_1+r$, $d_1+r$ may satisfy
\begin{align}
0< d_1+r< f\left(\bar \gamma _{1,dB}(1)\right),
\label{I_k_2}
\end{align}
or
\begin{align}
f\left(\bar \gamma _{1,dB}(k_2)\right) \leq d_1+r < f\left(\bar \gamma _{1,dB}(k_2+1)\right),
\label{I_k_2_1}
\end{align}
for $1\leq k_2\leq K-1$, or
\begin{align}
f\left(\bar \gamma _{1,dB}(K)\right) \leq d_1+r.
\label{I_k_2_2}
\end{align}

Due to the increasing monotonicity of $f(x)$ and $\xi<d_1+r$, we have six cases of $\xi$ and $d_1+r$: Case I): $\xi$ satisfies (\ref{I_k_0}) and $d_1+r$ satisfies (\ref{I_k_2}); Case II): $\xi$ satisfies (\ref{I_k_0}) and $d_1+r$ satisfies (\ref{I_k_2_1}); Case III): $\xi$ satisfies (\ref{I_k_0}) and $d_1+r$ satisfies (\ref{I_k_2_2}); Case IV) $\xi$ satisfies (\ref{I_k_0_1}) and $d_1+r$ satisfies (\ref{I_k_2_1}); Case V) $\xi$ satisfies (\ref{I_k_0_1}) and $d_1+r$ satisfies (\ref{I_k_2_2}); Case VI) $\xi$ satisfies (\ref{I_k_0_2}) and $d_1+r$ satisfies (\ref{I_k_2_2}).

Next, we first calculate the probability $\rho_{\text{I}}$ that the MU is in Region I for each case and then design the corresponding AP. In the sequential, we denote $S_{\text{I}}=\pi(d_1+r)^2-\pi\xi^2$ as the area of Region I, and denote $S_{\text{c}}$ as the area of the interference region, where the MU may appear in the C-SBS coverage. From Appendix C, we have
 \begin{align}
S_{\text{c}}\!\!=\!\!\left\{
\begin{aligned}
& \!\! \pi r^2-\pi\xi^2,  & \!\! &\xi \leq d_1 < r-\xi,\\
& \!\! (\pi\!-\!\varphi_2) r^2\!-\!\varphi_1\xi^2\!+\!d_1\xi \sin \varphi_1,  & \!\! & r \!-\!\xi \!\leq \!d_1\! \leq \!r\!+\!\xi,
\end{aligned}
\right.
\label{S_c_SI}
\end{align}
 where $\varphi_1=\arccos \frac{d_1^2+\xi^2-r^2}{2d_1\xi}$, $\varphi_2=\arccos \frac{d_1^2+r^2-\xi^2}{2d_1r}$.

\subsubsection{AP Design in Case I}
In this case, $\xi$ satisfies (\ref{I_k_0}) and $d_1+r$ satisfies (\ref{I_k_2}). The probability $\rho_{\text{I}}$ that the MU is in Region I is
\begin{align} \nonumber
\rho_{\text{I}}=&\Pr\left\{\xi\leq d_0\leq d_1+r\right\}\\ \nonumber
=& \Pr\left\{d_0 \geq \xi \right\}\!-\!\Pr\left\{d_0 \geq d_1+r\right\} \\ \nonumber
\overset{\text{(I-a)}}{\leq} & 1- 1+\left(\frac{1}{2}\right)^K\\
=&\left(\frac{1}{2}\right)^K,
\label{I_I_bound1}
\end{align}
where we use (\ref{Th2_bound1}) in (I-a).

To control the IP constraint $\eta$ at the MU, the AP of the C-SBS needs to satisfy $0\leq \rho_{\text{AP}}\leq 1$ and $\rho_{\text{AP}}\rho_{\text{I}}\frac{S_{\text{c}}}{S_{\text{I}}} \leq \eta$, i.e., $\rho_{\text{AP}} \leq \min \left\{\frac{\eta S_{\text{I}}}{\rho_{\text{I}}S_{\text{c}}},1\right\}=\min \left\{\frac{\eta [\pi(d_1+r)^2-\pi\xi^2]}{\rho_{\text{I}}S_{\text{c}}},1\right\}$, which can be lower bounded by
\begin{align} \nonumber
&\min\left\{\frac{\eta [\pi(d_1+r)^2-\pi\xi^2]}{\rho_{\text{I}}S_{\text{c}}},1\right\}\\
\overset{\text{(I-b)}}{\geq}& \min \left\{\frac{\eta 2^K[\pi(d_1\!+\!r)^2-\pi\xi^2]}{S_{\text{c}}},1\right\},
\label{I_I_bound2_SI}
\end{align}
where we use (\ref{I_I_bound1}) in (I-b).

Thus, to protect the MU in this case, the maximum AP of the C-SBS is
\begin{align}
 \rho_{\text{AP}}=\min \left\{\frac{\eta 2^K[\pi(d_1+r)^2-\pi\xi^2]}{S_{\text{c}}},1\right\},
\label{I_I_AP}
\end{align}
where $S_{\text{c}}$ is shown in (\ref{S_c_SI}).

\subsubsection{AP Design in Case II}
In this case, $\xi$ satisfies (\ref{I_k_0}) and $d_1+r$ satisfies (\ref{I_k_2_1}). The probability $\rho_{\text{I}}$ that the MU is in Region I is
\begin{align} \nonumber
\rho_{\text{I}}=& \Pr\left\{d_0 \geq \xi \right\}-\Pr\left\{d_0 \geq d_1+r\right\} \\ \nonumber
\overset{\text{(I-c)}}{\leq} & 1- \left(\frac{1}{2}\right)^K\left(C_K^{k_2+1}+C_K^{k_2+2}+\cdots+C_K^{K}\right)\\
=&\left(\frac{1}{2}\right)^K\left(C_K^{0}+C_K^{1}+\cdots+C_K^{k_2}\right),
\label{I_II_bound1}
\end{align}
where we use (\ref{Th2_bound1}) and (\ref{Th2_bound3}) in (I-c).

To control the IP constraint $\eta$ at the MU, the AP of the C-SBS needs to satisfy $0\leq \rho_{\text{AP}}\leq 1$ and $\rho_{\text{AP}}\rho_{\text{I}}\frac{S_{\text{c}}}{S_{\text{I}}} \leq \eta$, i.e., $\rho_{\text{AP}} \leq \min \left\{\frac{\eta S_{\text{I}}}{\rho_{\text{I}}S_{\text{c}}},1\right\}=\min \left\{\frac{\eta ([\pi(d_1+r)^2-\pi\xi^2]}{\rho_{\text{I}}S_{\text{c}}},1\right\}$, which can be lower bounded by
\begin{align}  \nonumber
&\min \left\{\frac{\eta [\pi(d_1+r)^2-\pi\xi^2]}{\rho_{\text{I}}S_{\text{c}}},1\right\}\\
\overset{\text{(I-d)}}{\geq} &   \min \left\{\frac{\eta 2^K[\pi(d_1+r)^2-\pi\xi^2]}{(C_K^{0}+C_K^{1}+\cdots+C_K^{k_2})S_{\text{c}}},1\right\},
\label{I_II_bound2_SI}
\end{align}
where we use (\ref{I_II_bound1}) in (I-d).

Thus, to protect the MU in this case, the maximum AP of the C-SBS is
\begin{align}
 \rho_{\text{AP}}=\min \left\{\frac{\eta 2^K[\pi(d_1+r)^2-\pi\xi^2]}{(C_K^{0}+C_K^{1}+\cdots+C_K^{k_2})S_{\text{c}}},1\right\},
\label{I_II_AP}
\end{align}
where $S_{\text{c}}$ is shown in (\ref{S_c_SI}).

\subsubsection{AP Design in Case III}
In this case, $\xi$ satisfies (\ref{I_k_0}) and $d_1+r$ satisfies (\ref{I_k_2_2}). The probability $\rho_{\text{I}}$ that the MU is in Region II is
\begin{align}
\rho_{\text{I}}= \Pr\left\{d_0 \geq \xi \right\}-\Pr\left\{d_0\geq d_1+r\right\}
\overset{\text{(I-e)}}{\leq}  1,
\label{I_III_bound1}
\end{align}
where we use (\ref{Th2_bound1}) and (\ref{Th2_bound2}) in (I-e).

To control the IP constraint $\eta$ at the MU, the AP of the C-SBS needs to satisfy $0\leq \rho_{\text{AP}}\leq 1$ and $\rho_{\text{AP}}\rho_{\text{I}}\frac{S_{\text{c}}}{S_{\text{I}}} \leq \eta$, i.e., $\rho_{\text{AP}} \leq \min \left\{\frac{\eta S_{\text{I}}}{\rho_{\text{I}}S_{\text{c}}},1\right\}=\min \left\{\frac{\eta [\pi(d_1+r)^2-\pi\xi^2]}{\rho_{\text{I}}S_{\text{c}}},1\right\}$, which can be lower bounded by
\begin{align} \nonumber
&\min \left\{\frac{\eta [\pi(d_1+r)^2-\pi\xi^2]}{\rho_{\text{I}}S_{\text{c}}},1\right\}\\
\overset{\text{(I-f)}}{\geq} & \min \left\{ \frac{\eta[\pi(d_1+r)^2-\pi\xi^2]}{S_{\text{c}}},1\right\},
\label{I_III_bound2}
\end{align}
where we use (\ref{I_III_bound1}) in (I-f).

Thus, to protect the MU in this case, the maximum AP of the C-SBS is
\begin{align}
\rho_{\text{AP}}= \min \left\{ \frac{\eta[\pi(d_1+r)^2-\pi\xi^2]}{S_{\text{c}}},1\right\},
\label{I_III_AP}
\end{align}
where $S_{\text{c}}$ is shown in (\ref{S_c_SI}).

\subsubsection{AP Design in Case IV}
In this case, $\xi$ satisfies (\ref{I_k_0_1}) and $d_1+r$ satisfies (\ref{I_k_2_1}). The probability $\rho_{\text{I}}$ that the MU is in Region II is
\begin{align} \nonumber
\rho_{\text{I}}=& \Pr\left\{d_0\geq \xi \right\}-\Pr\left\{d_0\geq d_1+r\right\} \\  \nonumber
\overset{\text{(I-g)}}{\leq} & \left(\frac{1}{2}\right)^K\left(C_K^{k_0}+C_K^{k_0+1}+\cdots+C_K^{K}\right)\\ \nonumber
&-\left(\frac{1}{2}\right)^K\left(C_K^{k_2+1}+C_K^{k_2+2}+\cdots+C_K^{K}\right)\\
=&\left(\frac{1}{2}\right)^K\left(C_K^{k_0}+C_K^{k_0+1}+\cdots+C_K^{k_2}\right),
\label{I_IV_bound1}
\end{align}
where we use (\ref{Th2_bound3}) in (I-g).

To control the IP constraint $\eta$ at the MU, the AP of the C-SBS needs to satisfy $0\leq \rho_{\text{AP}}\leq 1$ and $\rho_{\text{AP}}\rho_{\text{I}}\frac{S_{\text{c}}}{S_{\text{I}}} \leq \eta$, i.e., $\rho_{\text{AP}} \leq \min \left\{\frac{\eta S_{\text{I}}}{\rho_{\text{I}}S_{\text{c}}},1\right\}=\min \left\{\frac{\eta [\pi(d_1+r)^2-\pi\xi^2]}{\rho_{\text{I}}S_{\text{c}}},1\right\}$, which can be lower bounded by
\begin{align} \nonumber
&\min \left\{\frac{\eta [\pi(d_1+r)^2-\pi\xi^2]}{\rho_{\text{I}}S_{\text{c}}},1\right\}\\
\overset{(\text{I-h})}{\geq} & \min \left\{\frac{\eta 2^K(\pi[\pi(d_1+r)^2-\pi\xi^2] }{(C_K^{k_0}+C_K^{k_0+1}+\cdots+C_K^{k_2})S_{\text{c}}},1\right\},
\label{I_IV_bound2}
\end{align}
where we use (\ref{I_IV_bound1}) in (I-h).

Thus, to protect the MU in this case, the maximum AP of the C-SBS is
\begin{align}
\rho_{\text{AP}}= \min \left\{\frac{\eta 2^K[\pi(d_1+r)^2-\pi\xi^2] }{(C_K^{k_0}+C_K^{k_0+1}+\cdots+C_K^{k_2})S_{\text{c}}},1\right\},
\label{I_IV_AP}
\end{align}
where $S_{\text{c}}$ is shown in (\ref{S_c_SI}).

\subsubsection{AP Design in Case V}
In this case, $\xi$ satisfies (\ref{I_k_0_1}) and $d_1+r$ satisfies (\ref{I_k_2_2}). The probability $\rho_{\text{I}}$ that the MU is in Region I is
\begin{align} \nonumber
\rho_{\text{I}}=&\Pr\left\{d_0\geq \xi \right\}-\Pr\left\{d_0 \geq d_1+r\right\}\\
\overset{\text{(I-i)}}{\leq} &  \left(\frac{1}{2}\right)^K\left(C_K^{k_0}+C_K^{k_0+1}+\cdots+C_K^{K}\right),
\label{I_V_bound1}
\end{align}
where we use (\ref{Th2_bound2}) and (\ref{Th2_bound3}) in (I-i).

To control the IP constraint $\eta$ at the MU, the AP of the C-SBS needs to satisfy $0\leq \rho_{\text{AP}}\leq 1$ and $\rho_{\text{AP}}\rho_{\text{I}}\frac{S_{\text{c}}}{S_{\text{I}}} \leq \eta$, i.e., $\rho_{\text{AP}} \leq \min \left\{\frac{\eta S_{\text{I}}}{\rho_{\text{I}}S_{\text{c}}},1\right\}=\min \left\{\frac{\eta [\pi(d_1+r)^2-\pi\xi^2]}{\rho_{\text{I}}S_{\text{c}}},1\right\}$, which can be lower bounded by
\begin{align} \nonumber
& \min \left\{\frac{\eta [\pi(d_1+r)^2-\pi\xi^2]}{\rho_{\text{I}}S_{\text{c}}},1\right\}\\
\overset{(\text{I-j})}{\geq} & \min \left\{\frac{\eta[\pi(d_1+r)^2-\pi\xi^2]}{(C_K^{k_0}+C_K^{k_0+1}+\cdots+C_K^{K})S_{\text{c}}},1\right\},
\label{I_IV_bound2}
\end{align}
where we use (\ref{I_V_bound1}) in (I-j).

Thus, to protect the MU in this case, the maximum AP of the C-SBS is
\begin{align}
\rho_{\text{AP}}= \min \left\{\frac{\eta[\pi(d_1+r)^2-\pi\xi^2]}{(C_K^{k_0}+C_K^{k_0+1}+\cdots+C_K^{K})S_{\text{c}}},1\right\},
\label{I_V_AP}
\end{align}
where $S_{\text{c}}$ is shown in (\ref{S_c_SI}).

\subsubsection{AP Design in Case VI}
In this case, $\xi$ satisfies (\ref{I_k_0_2}) and $d_1+r$ satisfies (\ref{I_k_2_2}). The probability $\rho_{\text{I}}$ that the MU is in Region I is
\begin{align}
\rho_{\text{I}}\!=\!\Pr\left\{d_0\!\geq \!\xi\right\}\!-\!\Pr\left\{d_0\!\geq\! d_1+r\right\}
\overset{\text{(I-k)}}{\leq}  \left(\frac{1}{2}\right)^K,
\label{I_VI_bound1}
\end{align}
where we use (\ref{Th2_bound2}) in (I-k).

To control the IP constraint $\eta$ at the MU, the AP of the C-SBS needs to satisfy $0\leq \rho_{\text{AP}}\leq 1$ and $\rho_{\text{AP}}\rho_{\text{I}}\frac{S_{\text{c}}}{S_{\text{I}}} \leq \eta$, i.e., $\rho_{\text{AP}} \leq \min \left\{\frac{\eta S_{\text{I}}}{\rho_{\text{I}}S_{\text{c}}},1\right\}=\min \left\{\frac{\eta [\pi(d_1+r)^2-\pi\xi^2]}{\rho_{\text{I}}S_{\text{c}}},1\right\}$, which can be lower bounded by
\begin{align} \nonumber
&\min \left\{\frac{\eta[\pi(d_1+r)^2-\pi\xi^2]}{\rho_{\text{I}}S_{\text{c}}},1\right\}\\
\overset{\text{(I-l)}}{\geq} & \min \left\{\frac{\eta2^K[\pi(d_1+r)^2-\pi\xi^2]}{S_{\text{c}}},1\right\},
\label{I_VI_bound2}
\end{align}
where we use (\ref{I_VI_bound1}) in (I-l).

Thus, to protect the MU in this case, the maximum AP of the C-SBS is
\begin{align}
\rho_{\text{AP}}=\min \left\{\frac{\eta2^K[\pi(d_1+r)^2-\pi\xi^2]}{S_{\text{c}}},1\right\},
\label{I_VI_AP}
\end{align}
where $S_{\text{c}}$ is shown in (\ref{S_c_SI}).

\subsection{AP design in Scenario II}
In this part, we exploit the results in Theorem 1 to design the AP of the C-SBS to satisfy the IP constraint at the MU in Scenario II, where the macro cell is divided into three regions. In this scenario, the C-SBS may cause interference to the MU if the MU is located in Region II, i.e., $d_1-r \leq d_0\leq d_1+r$. According to relation between the measured SNRs at the C-SBS and $d_1-r$, $d_1-r$ may satisfy
\begin{align}
0< d_1-r < f\left(\bar \gamma _{1,dB}(1)\right),
\label{II_k_1}
\end{align}
or
\begin{align}
f\left(\bar \gamma _{1,dB}(k_1)\right) \leq d_1-r < f\left(\bar \gamma _{1,dB}(k_1+1)\right),
\label{II_k_1_1}
\end{align}
for $1\leq k_1 \leq K-1$, or
\begin{align}
f\left(\bar \gamma _{1,dB}(K)\right) \leq d_1-r.
\label{II_k_1_2}
\end{align}

Meanwhile, according to the relation between the measured SNRs at the C-SBS and $d_1+r$, $d_1+r$ may satisfy (\ref{I_k_2}), or (\ref{I_k_2_1}), or (\ref{I_k_2_2}).

Due to the increasing monotonicity of $f(x)$ and $\xi<d_1+r$, we have six cases of $d_1-r$ and $d_1+r$: Case I): $d_1-r$ satisfies (\ref{II_k_1}) and $d_1+r$ satisfies (\ref{I_k_2}); Case II): $d_1-r$ satisfies (\ref{II_k_1}) and $d_1+r$ satisfies (\ref{I_k_2_1}); Case III): $d_1-r$ satisfies (\ref{II_k_1}) and $d_1+r$ satisfies (\ref{I_k_2_2}); Case IV) $d_1-r$ satisfies (\ref{II_k_1_1}) and $d_1+r$ satisfies (\ref{I_k_2_1}); Case V) $d_1-r$ satisfies (\ref{II_k_1_1}) and $d_1+r$ satisfies (\ref{I_k_2_2}); Case VI) $d_1-r$ satisfies (\ref{II_k_1_2}) and $d_1+r$ satisfies (\ref{I_k_2_2}). Since these six cases in Scenario II are similar to those in Scenario I, we analyze each case and design the corresponding AP in Appendix D.

\subsection{AP Design in Scenario III}
In this part, we exploit the results in Theorem 1 to design the AP of the C-SBS to satisfy the IP constraint at the MU in Scenario III, where the macro cell is divided into two regions. In this scenario, the C-SBS may cause interference to the MU if the MU is located in Region II, i.e., $d_1-r \leq d_0 \leq R$. According to relation between the measured SNRs at the C-SBS and $d_1-r$, $d_1-r$ may satisfy (\ref{II_k_1}), or (\ref{II_k_1_1}), or (\ref{II_k_1_2}).

Meanwhile, according to the relation between the measured SNRs at the C-SBS and $R$, $R$ may satisfy
 \begin{align}
0< R < f\left(\bar \gamma _{1,dB}(1)\right),
\label{III_k_3}
\end{align}
or
\begin{align}
f\left(\bar \gamma _{1,dB}(k_1)\right) \leq R < f\left(\bar \gamma _{1,dB}(k_1+1)\right),
\label{III_k_3_1}
\end{align}
for $1\leq k_3 \leq K-1$, or
\begin{align}
f\left(\bar \gamma _{1,dB}(K)\right) \leq R.
\label{III_k_3_2}
\end{align}

Due to the increasing monotonicity of $f(x)$ and $d_1-r<R$, we have six cases of $d_1-r$ and $R$: Case I): $d_1-r$ satisfies (\ref{II_k_1}) and $R$ satisfies (\ref{III_k_3}); Case II): $d_1-r$ satisfies (\ref{II_k_1}) and $R$ satisfies (\ref{III_k_3_1}); Case III): $d_1-r$ satisfies (\ref{II_k_1}) and $R$ satisfies (\ref{III_k_3_2}); Case IV) $d_1-r$ satisfies (\ref{II_k_1_1}) and $R$ satisfies (\ref{III_k_3_1}); Case V) $d_1-r$ satisfies (\ref{II_k_1_1}) and $R$ satisfies (\ref{III_k_3_2}); Case VI) $d_1-r$ satisfies (\ref{II_k_1_2}) and $R$ satisfies (\ref{III_k_3_2}). Since these six cases in Scenario III are similar to those in Scenario I, we analyze each case and design the corresponding AP in Appendix E.

\section{Numerical results}
In this section, we provide numerical results to demonstrate the performance of the proposed algorithm. To demonstrate the advantages of the proposed algorithm, we compare our results with the algorithm in \cite{HetNet_CR_Guodong} and the algorithm based on the Statistic Location Information of MUs in \cite{HetNet_CR_3, HetNet_CR_4, HetNet_CR_5} (referred to SLI algorithm hereafter). In the simulation, we adopt the system model in Section II, where the radius of the MBS coverage is $R=0.5$ km, the radius of the C-SBS coverage is $r=0.1$ km, and $10^5$ MUs are uniformly distributed in the MBS coverage. Furthermore, we assume the power of the AWGN $\sigma^2=-114$ dBm, the target SNR of the MU is $\gamma_{T,dB}=20$, the IP constraint at the MU is $\eta=0.01$, the number of blocks $I=50$ unless otherwise specified, and the number of subblocks within each block is $J=1$.

             \begin{figure}[t!]
            \centering
            \includegraphics[scale=0.5]{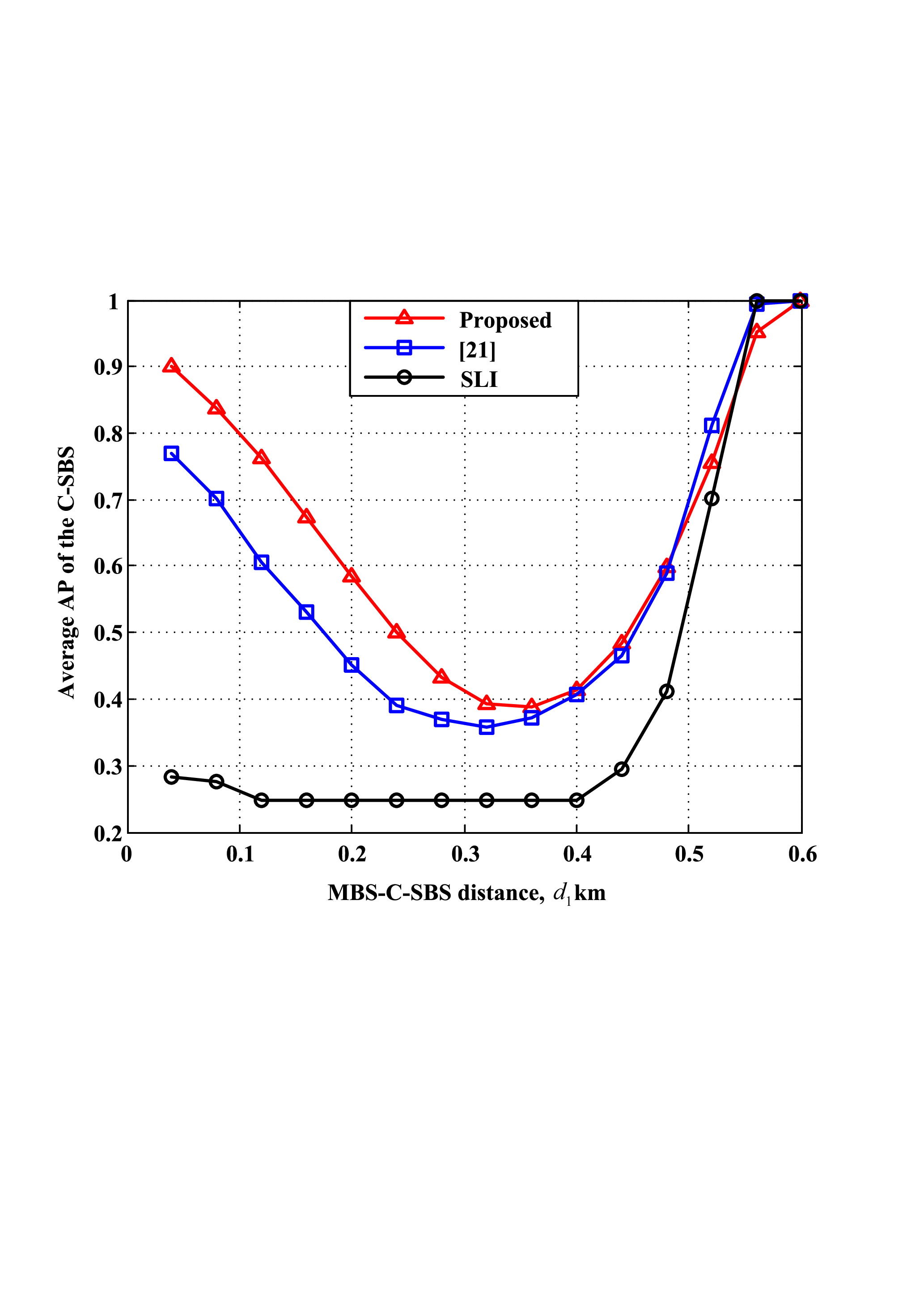}
            \caption{The comparison of the APs with different algorithms.}
            \label{AP_d1}
        \end{figure}
Fig. \ref{AP_d1} compares the AP of the proposed algorithm with the algorithm in \cite{HetNet_CR_Guodong} and the SLI. From this figure, we observe that the curve of the AP with the proposed algorithm is a ``U'' shape. In particular, the AP of the proposed algorithm first decreases from around $0.9$ to around $0.4$ as the MBS-C-SBS distance $d_1$ grows from $0.04$ km to $0.36$ km, and then increases from around $0.4$ to around $1$ as $d_1$ grows from $0.36$ km to $0.6$ km. The reason is that, when $d_1$ is small, the area of the interference region increases as $d_1$ grows. This reduces the AP of the C-SBS. When $d_1$ is large, the area of the interference region reduces as $d_1$ grows. This increases the AP of the C-SBS. Besides, the proposed algorithm outperforms both the algorithm in \cite{HetNet_CR_Guodong} and the SLI, when the C-SBS is in the MBS coverage, i.e., $d_1\leq R=0.5$ km. Specifically, the proposed algorithm is able to improve the AP by up to $15\%$ compared with the algorithm in \cite{HetNet_CR_Guodong} and improve the AP by up to $60\%$ compared with the SLI algorithm. This figure also indicates that the C-SBS can obtain a high AP when the C-SBS is deployed close to or far away from the MBS.

           \begin{figure}[t!]
            \centering
            \includegraphics[scale=0.5]{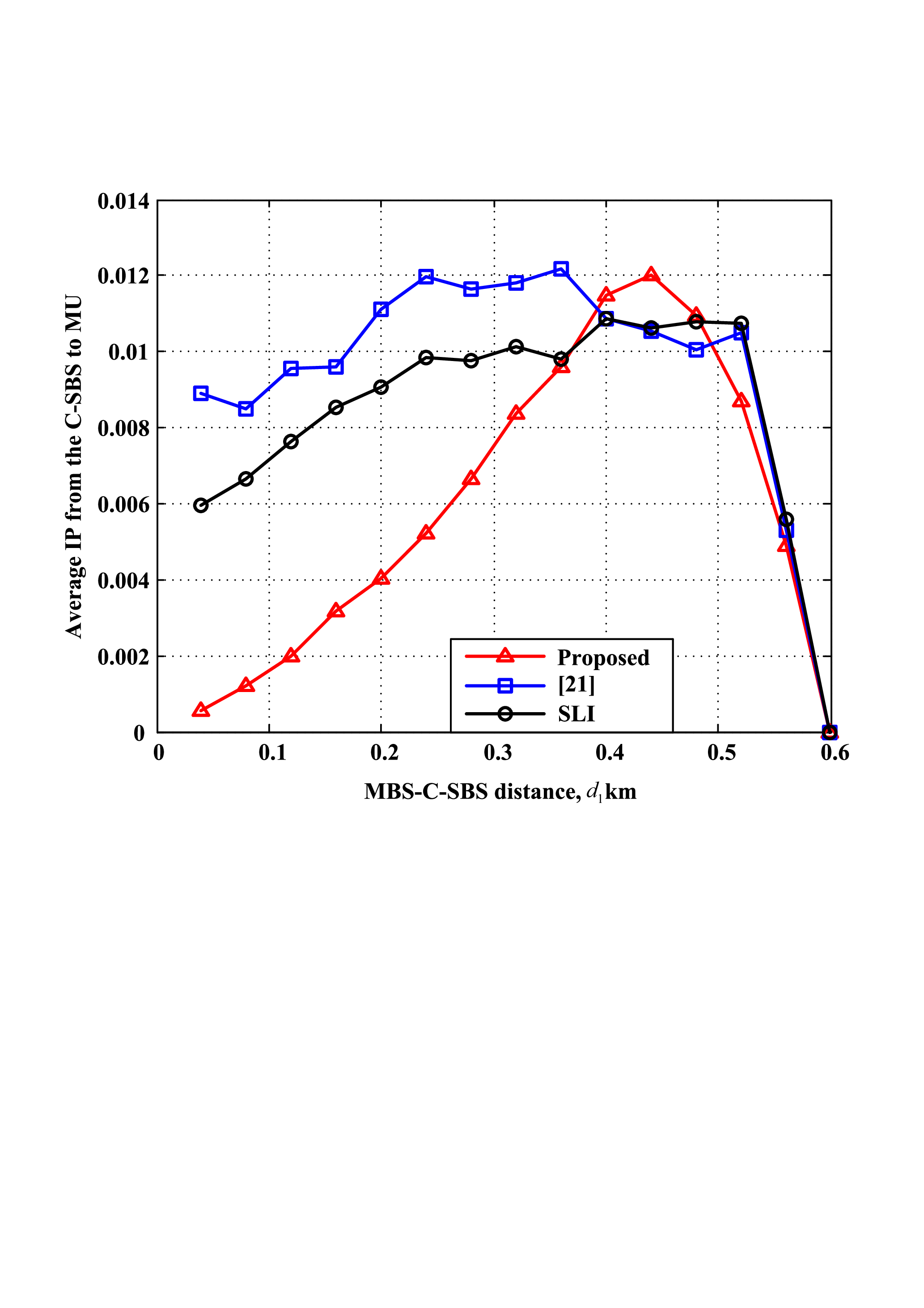}
            \caption{IP from the C-SBS to the MU with different algorithms corresponding to the AP in Fig. \ref{AP_d1}.}
            \label{IP_d1}
        \end{figure}

Fig. \ref{IP_d1} provides the IP of different algorithms corresponding to the AP in Fig. \ref{AP_d1}. In general, the IP with the proposed algorithm is below the IP constraint $\eta=0.01$ except for $0.37 \leq d_1\leq 0.5$ km, where the maximum IP is around $0.012$. In fact, the accuracy of the designed AP is affected by the AWGN. With a finite number of samples $K$, the designed AP is not accurate. Thus, there is a chance that the IP constraint $\eta=0.01$ at the MU is violated with the designed AP. Besides, we observe that the IP with the algorithm in \cite{HetNet_CR_Guodong} is above the IP constraint $\eta=0.01$ for $0.17 \leq d_1\leq 0.52$ km. And the IP with the SLI algorithm is above the IP constraint $\eta=0.01$ for $0.24 \leq d_1\leq 0.52$ km. Thus, the C-SBS with the proposed algorithm can be deployed in a wider region than the C-SBS with the algorithm in \cite{HetNet_CR_Guodong} and the SLI algorithm.

           \begin{figure}[t!]
            \centering
            \includegraphics[scale=0.5]{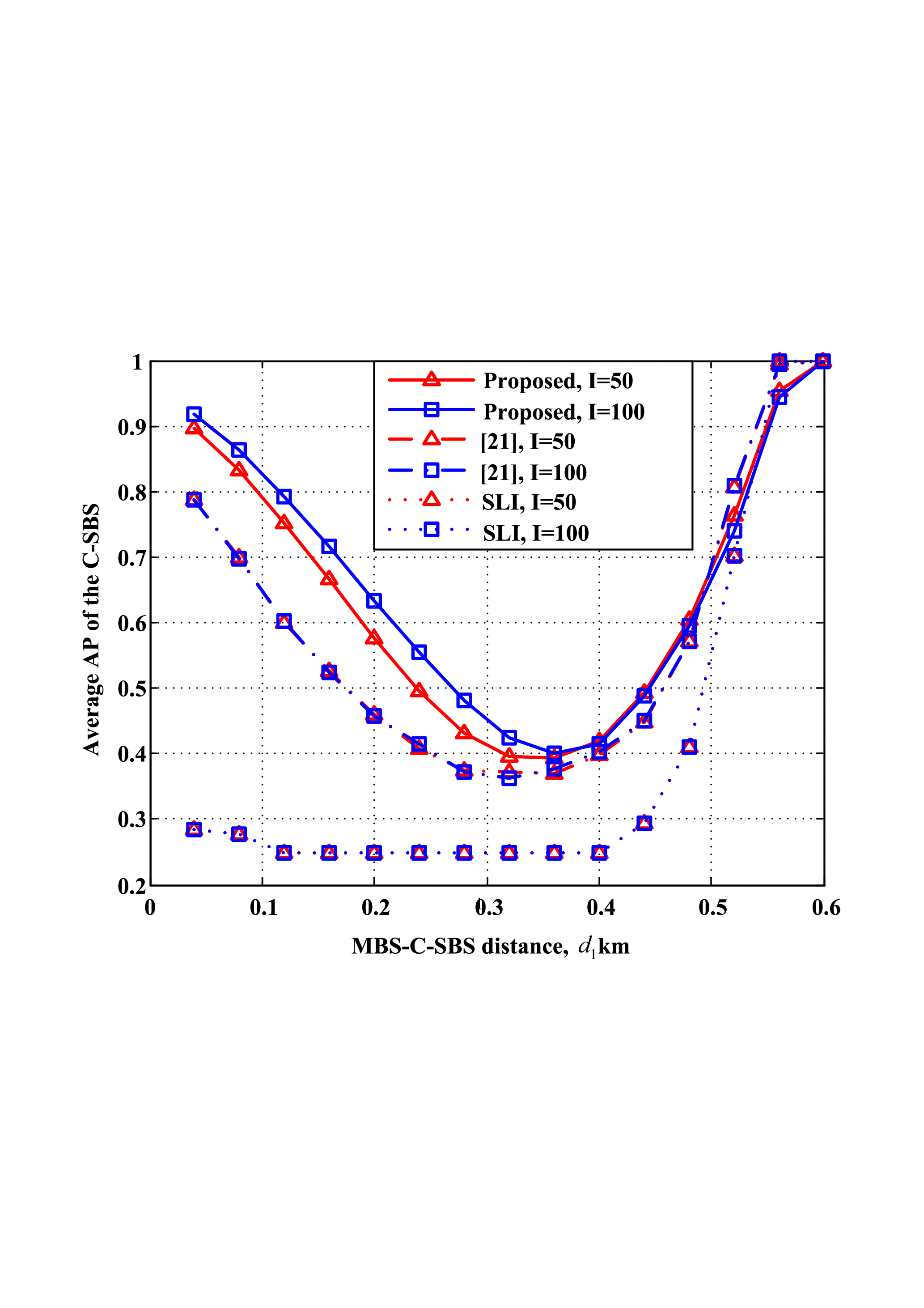}
            \caption{AP with different number of blocks $I$.}
            \label{AP_blocks}
        \end{figure}
Fig. \ref{AP_blocks} investigates the impact of the number of blocks $I$ on the AP of different algorithms. With the proposed algorithm, the AP increases as $I$ grows from $50$ to $100$ for small $d_1$, i.e., $d_1\leq 0.36$ km, and remains constant as $I$ grows from $50$ to $100$ for large $d_1$, i.e., $d_1> 0.36$ km. The reason is that, each measured SNR at the C-SBS contains both the information of $d_1$ and the noise. When $d_1$ is small, the measured SNR is large and contains more information of $d_1$ than the noise. Then, the C-SBS can learn more accurate information of $d_1$ with more blocks, and obtain a larger AP. When $d_1$ is large, the measured SNR is small and contains more noise than the information of $d_1$. Then, the increase of blocks has little impact on the AP. On the other hand, the AP always remains constant as $I$ grows from $50$ to $100$ with the algorithm in \cite{HetNet_CR_Guodong} and the SLI algorithm.

           \begin{figure}[t!]
            \centering
            \includegraphics[scale=0.5]{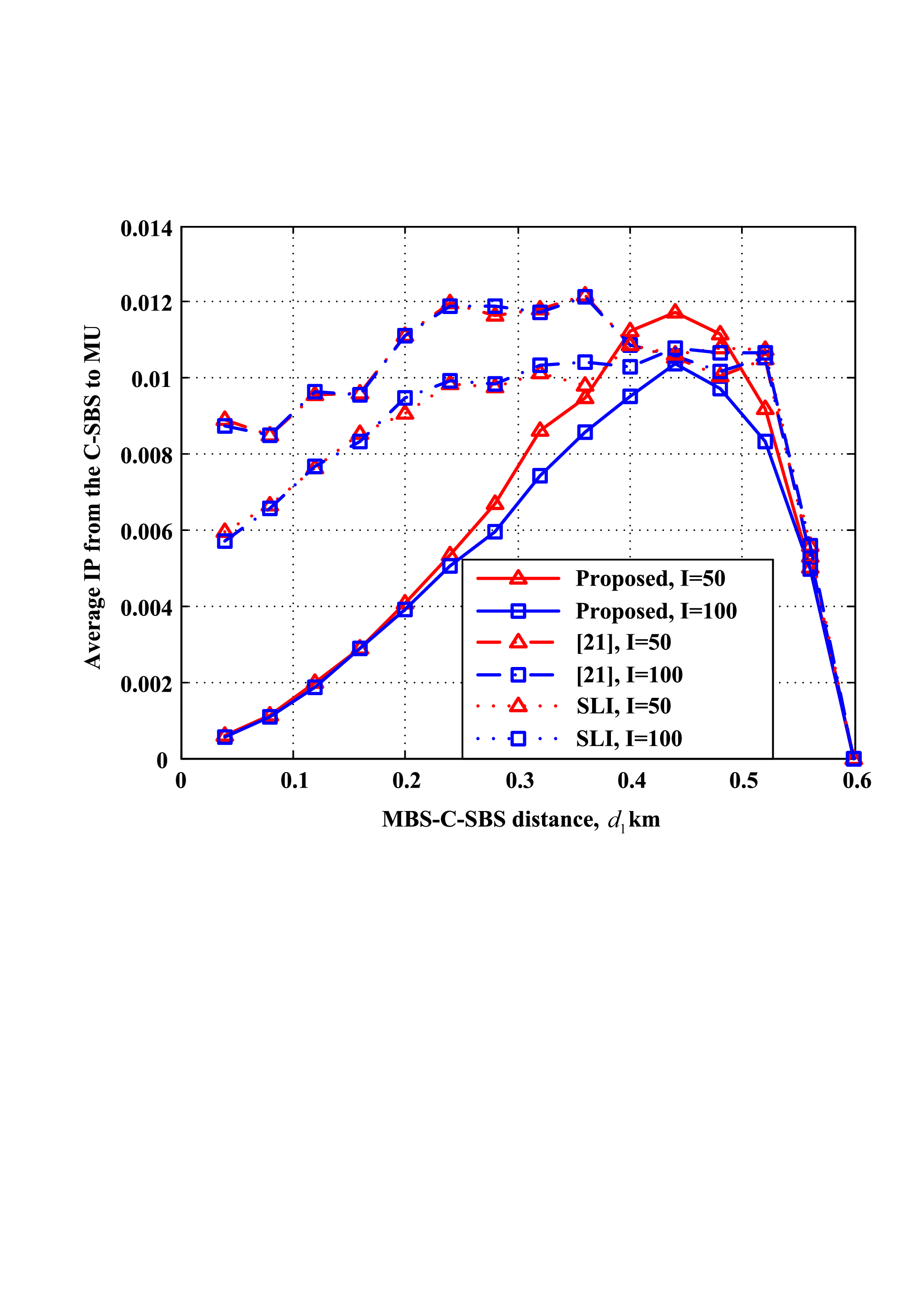}
            \caption{IP with different number of blocks $I$.}
            \label{IP_blocks}
        \end{figure}

Fig. \ref{IP_blocks} investigates the impact of the number of blocks $I$ on the IP of different algorithms. With the proposed algorithm, the IP decreases as $I$ grows. In particular, the IP is generally smaller than the IP constraint $\eta=0.01$ when $I$ is larger than $100$. This indicates that the IP performance with the proposed algorithm can be optimized by increasing $I$. With the algorithm in \cite{HetNet_CR_Guodong}, the IP always remains constant as $I$ grows. By combining Fig. \ref{AP_blocks} and Fig. \ref{IP_blocks}, we demonstrate that the increase of $I$ may improve the AP and reduce the IP with the proposed algorithm. However, the algorithm in \cite{HetNet_CR_Guodong} and the SLI algorithm are not sensitive to $I$. In other words, the C-SBS cannot satisfy the IP constraint by increasing $I$. This limits the application of the algorithm in \cite{HetNet_CR_Guodong} and the SLI algorithm.

            \begin{figure}[t!]
            \centering
            \includegraphics[scale=0.5]{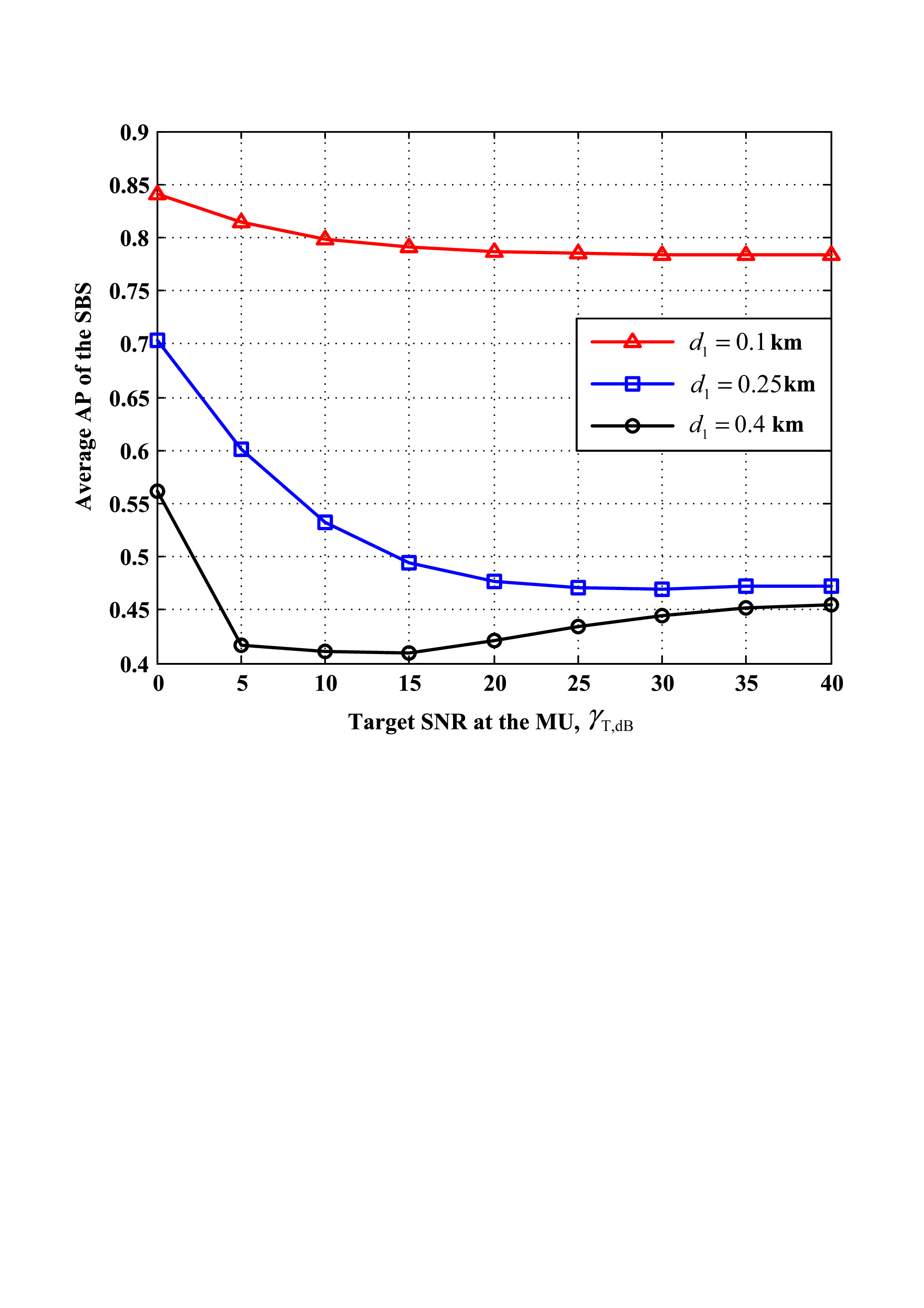}
            \caption{AP with different target SNRs at the MU in different scenarios.}
            \label{AP_gamma_T}
            \end{figure}

                    \begin{figure}[t!]
            \centering
            \includegraphics[scale=0.5]{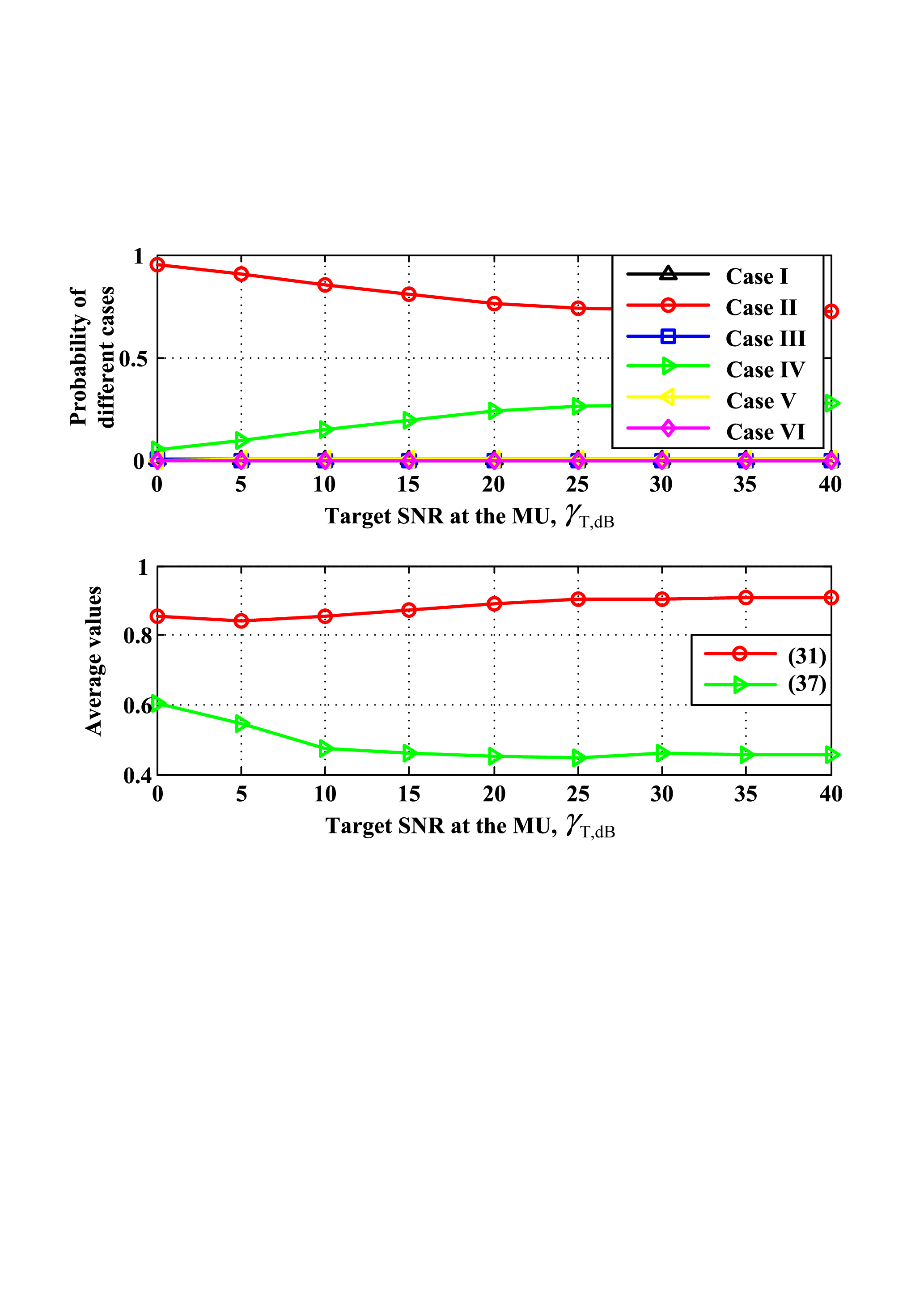}
            \caption{The upper subfigure provides the probabilities of different cases for $d_1=0.1$ km in the simulation; the lower subfigure provides the average values of (\ref{I_II_AP}) in Case II and (\ref{I_IV_AP}) in Case IV in the simulation.}
            \label{Prob_gamma_T}
            \end{figure}

Fig. \ref{AP_gamma_T} studies the AP with different target SNRs at MUs in different scenarios. In particular, when $d_1=0.1$ km, i.e., Scenario I, the AP of the C-SBS is calculated by (\ref{I_I_AP}) in Case I, (\ref{I_II_AP}) in Case II, (\ref{I_III_AP}) in Case III, (\ref{I_IV_AP}) in Case IV, (\ref{I_V_AP}) in Case V, or (\ref{I_VI_AP}) in Case VI. We observe that the AP of the C-SBS decreases as $\gamma_{T}$ grows. Since the AP is related to six cases, it is difficult to analyze the trend of the AP theoretically. Instead, we study the probabilities of different cases in Fig. \ref{Prob_gamma_T}.  From Fig. \ref{Prob_gamma_T}, the AP is mainly determined by Case II and Case IV. In particular, the probabilities of Case II and Case IV decrease and increase as $\gamma_{T}$ grows, respectively. We also provide the average values of (\ref{I_II_AP}) in Case II and (\ref{I_IV_AP}) in Case IV. Since the average value of (\ref{I_II_AP}) in Case II is much larger than that of (\ref{I_IV_AP}) in Case IV, the overall decreases as $\gamma_{T}$ grows. The AP curves for $d_1=0.25$ km (Scenario II) and $d_1=0.4$ km (Scenario III) can be similarly analyzed.

%
%

                \begin{figure}[t!]
            \centering
            \includegraphics[scale=0.5]{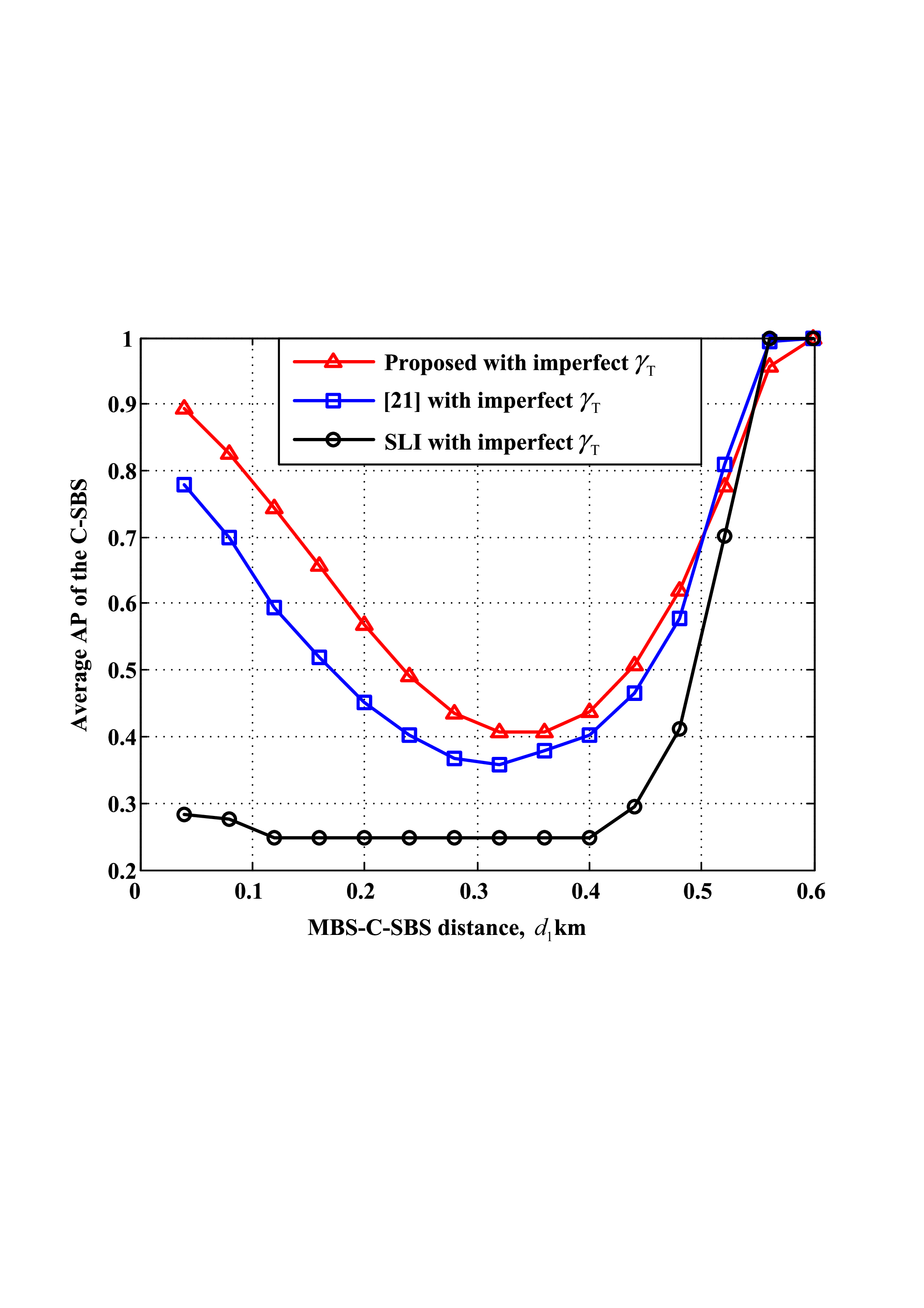}
            \caption{AP performance of different algorithms with imperfect target SNR $\gamma_T$. In particular, real $\gamma_{T,dB}$ is uniformly distributed between $17$ dB and $23$ dB, and the estimated target SNR $\gamma_{T,dB}=20$ dB.}
            \label{AP_d1_imperfect}
        \end{figure}

                \begin{figure}[t!]
            \centering
            \includegraphics[scale=0.5]{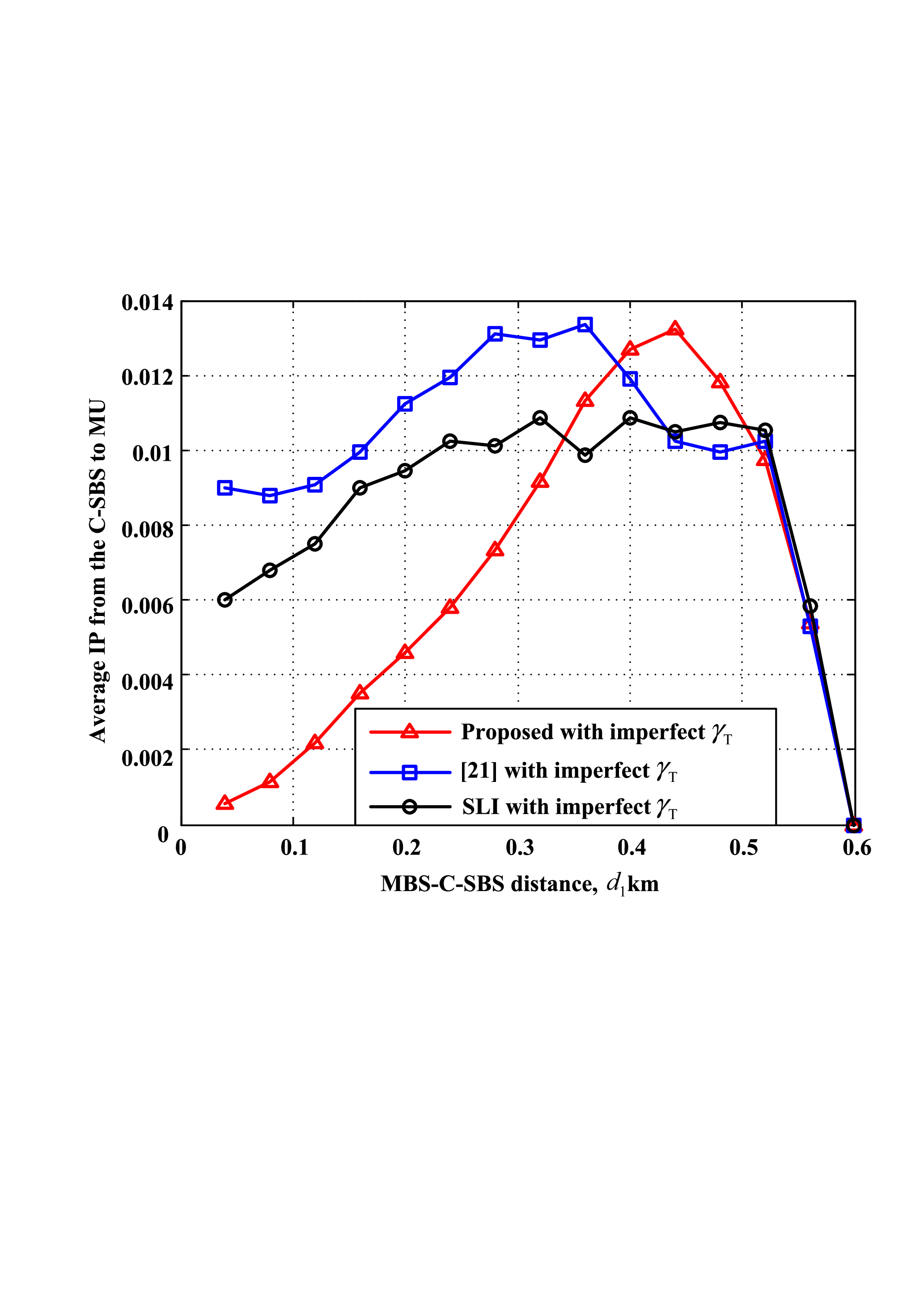}
            \caption{IP performance of different algorithms with imperfect target SNR $\gamma_T$. In particular, real $\gamma_{T,dB}$ is uniformly distributed between $17$ dB and $23$ dB, the estimated target SNR $\gamma_{T,dB}=20$ dB.}
            \label{IP_d1_imperfect}
        \end{figure}

In the proposed algorithm, the target SNR $\gamma_{T,dB}$ of the MU is needed at the C-SBS to design the AP and satisfy the IP constraint at the MU. However, the information of $\gamma_{T,dB}$ is unknown at the C-SBS. Although $\gamma_{T,dB}$ can be estimated at the C-SBS by observing the \emph{modulation and coding scheme} (MCS) of the MBS signal, there may exist some estimation errors of $\gamma_{T,dB}$. Here, we consider imperfect $\gamma_{T,dB}$ at the C-SBS and compare the performance of different algorithms. In particular, we assume that $\gamma_{T,dB}$ is uniformly distributed between $17$ dB and $23$ dB. In other words, the estimated target SNR $\gamma_{T,dB}=20$ dB has up to $\pm 3$ dB errors. The AP and IP performance is provided in Fig. \ref{AP_d1_imperfect} and Fig. \ref{IP_d1_imperfect}, respectively. By comparing the AP performance with perfect target SNR $\gamma_{T,dB}$ in Fig. \ref{AP_d1} and the AP performance with imperfect target SNR $\gamma_{T,dB}$ in Fig. \ref{AP_d1_imperfect}, imperfect target SNR $\gamma_{T,dB}$ has little impact on the AP performance. Meanwhile, by comparing the IP performance with perfect target SNR $\gamma_{T,dB}$ in Fig. \ref{IP_d1} and the IP performance with imperfect target SNR $\gamma_{T,dB}$ in Fig. \ref{IP_d1_imperfect}, imperfect target SNR $\gamma_{T,dB}$ may slightly increase the IP with the proposed algorithm and the algorithm in \cite{HetNet_CR_Guodong}, and has little impact on the SLI algorithm (Since the SLI algorithm is independent on target SNR $\gamma_{T,dB}$). In particular, the IP with the proposed algorithm is below the IP constraint $\eta=0.01$ except for $0.34 \leq d \leq 0.52$ km, and the IP with the algorithm in \cite{HetNet_CR_Guodong} is below the IP constraint $\eta=0.01$ except for $0.34 \leq d \leq 0.52$ km, and the IP with the SLI algorithm is below the IP constraint $\eta=0.01$ except for $0.24 \leq d \leq 0.52$ km. Thus, the proposed algorithm can be applied in a wider region compared with the algorithms in \cite{HetNet_CR_Guodong} and the SLI algorithm, even when the target SNR $\gamma_{T,dB}$ is imperfect.

Besides, to satisfy the IP constraint with the proposed algorithm for $0.34 \leq d \leq 0.52$ km in the practical deployment, two schemes may be adopted. The first one is to increase the number of blocks as shown in Fig. \ref{AP_blocks} (if possible). The second one is to reduce the AP for $0.34 \leq d \leq 0.52$ km. For instance, for $d_1=0.44$ km, the AP and IP are around $0.5$ and $0.013$, respectively. Then, we reduce the AP as $\rho_{\text{AP}}=0.5\times \frac{\eta}{0.013}\approx 0.4$ when considering imperfect target SNR $\gamma_{T,dB}$.

\section{Conclusions}
In this paper, we proposed a learning-based method for the C-SBS to exploit the MBS-MU distance information and realize the coexistence between a cognitive small cell and a macro cell in a two-tier underlay HetNet. In particular, we first enabled the C-SBS to analyze the MBS signal and learn the MBS-MU distance information. Then, we calculated the upper bound of the probability that the MU is in the C-SBS coverage and design an AP with a closed-form expression to satisfy the IP constraint at the MU. Numerical results indicated that the proposed algorithm outperforms the existing methods up to $60\%$ AP (or transmission opportunity) improvement. With the proposed algorithms, the cognitive small cell can use the same frequency bands as the macro cell at the same time in a HetNet. This enhances the spectrum efficiency and provides a potential solution for the spectrum scarcity problem in the future wireless communications. Besides, the cognitive small cell can coexist with the macro cell without any centralized coordinator. This reduces the cost to deploy a cognitive small cell within a HetNet and is meaningful from the practical perspective. Furthermore, we demonstrate that the MBS-MU distance information is of great importance for a cognitive small cell to effectively coexist with the macro cell.

\section{Appendix}

\subsection{Proof of Lemma 1}

To prove Lemma 1, we only need to verify $F_{\Gamma_{1,dB}}\left(\gamma_{T,dB}+37.6\log_{10}\left(\frac{d_0}{d_1}\right)\right)=\frac{1}{2}$. From (\ref{gamma_c_dB_CDF_2}), we have
\begin{align} \nonumber
&F_{\Gamma_{1,dB}}\left(\gamma_{T,dB}+37.6\log_{10}\left(\frac{d_0}{d_1}\right)\right)\\ \nonumber
=&\Pr\left\{\gamma_{1,dB} \leq \gamma_{T,dB}+37.6\log\left(\frac{d_0}{d_1}\right) \right\}\\ \nonumber
=&\int_{ - \infty }^\infty  f_{\Theta _s}\left(\theta _s \right) F_{\Theta _r}\left(- {\theta _s} \right)d\theta _s\\ \nonumber
= &\!\int_{ - \infty }^0  \!\!\!f_{\Theta _s}\!\! \left(\theta _s \right)\frac{1}{1 \!+\! 10^{\frac{\theta _s}{10}}}d\theta _s\!\!+\!\!\int_{0 }^\infty \!\!\! f_{\Theta _s}\!\!\left(\theta _s \right)\frac{1}{1\! + \! 10^{\frac{\theta _s}{10}}}d\theta _s\\
\!=&\!\int_{ 0}^\infty  \!\!\! f_{\Theta _s}\!\!\left(-\theta _s \!\right)\frac{1}{1\! + \!10^{-\frac{\theta _s}{10}}}d\theta _s\!\!+\!\!\int_{0 }^\infty \!\!\! f_{\Theta _s}\!\!\left(\theta _s \!\right)\frac{1}{1 \!+ \!10^{\frac{\theta _s}{10}}}d\theta _s.
\label{Proof_1}
\end{align}

From (\ref{Theta_2_pdf}), we observer that ${{f_{{\Theta _s}}}\left( {{\theta _s}} \right)}$ is an even function. Then, we have $f_{\Theta _s}\left(-\theta _s \right)=f_{\Theta _s}\left(\theta _s \right)$. Meanwhile, we have $\frac{1}{1 + 10^{-\frac{\theta _s}{10}}}=1-\frac{1}{1 + 10^{\frac{\theta _s}{10}}}$. Thus, (\ref{Proof_1}) can be rewritten as
\begin{align}\label{Proof_2}
\nonumber
&F_{\Gamma_{1,dB}}\left(\gamma_{T,dB}+37.6\log_{10}\left(\frac{d_0}{d_1}\right)\right)\\ \nonumber
=&\int_{ 0}^\infty \!\! \! f_{\Theta _s}\left(\theta _s \right)\left(\!1-\!\frac{1}{1 \!+ \!10^{\frac{\theta _s}{10}}}\!\right)d\theta _s\!+\!\int_{0 }^\infty  \!\!\!f_{\Theta _s}\left(\theta _s \right)\frac{1}{1 \!+ \! 10^{\frac{\theta _s}{10}}}d\theta _s\\ \nonumber
=&\int_{0 }^\infty  f_{\Theta _s}\left(\theta _s \right)d\theta _s\\
=&\frac{1}{2}.
\end{align}
Here, we complete the proof of Lemma 1.

\subsection{Proof of Theorem 1}
Based on the value of $d_0^e$, we discuss three cases as follows:

$\bullet$ \emph{For the case $f(\bar \gamma_{1, dB}(1)) >d_0^e >0$}: In this case, it is straightforward to obtain
\begin{align}
1=\Pr\left\{d_0 > 0  \right\}
>\Pr\left\{d_0 > d_0^e \right\}
> \Pr\left\{d_0 > f(\bar \gamma_{1, dB}(1)) \right\}.
\label{C_bound_1}
\end{align}

Then, we calculate $\Pr\left\{d_0 > f(\bar \gamma_{1, dB}(1))\right\}$ in the following.

From the proof of Lemma 1, the probability that $\gamma_{1, dB}(k)$ ($1\leq k \leq K$) is larger than or equal to $\gamma _{T,dB}+37.6{\log _{10}}\left( \frac{d_0}{d_1} \right)$ is $\frac{1}{2}$. Then, the probability that $\bar \gamma_{1, dB}(1)$ is larger than $\gamma _{T,dB}+37.6{\log _{10}}\left( \frac{d_0}{d_1} \right)$ is equal to the probability that $K$ measured SNRs in $\bar \gamma_{1, dB}(k)$ $\left(1 \leq k \leq K\right)$ are larger than $\gamma _{T,dB}+37.6{\log _{10}}\left( \frac{d_0}{d_1} \right)$, i.e.,
\begin{align}\nonumber
&\Pr\left\{\bar \gamma_{1, dB}(1) > \gamma _{T,dB}+37.6{\log _{10}}\left( \frac{d_0}{d_1} \right)\right\}\\ \nonumber
=& \left(\Pr\left\{\bar \gamma_{1, dB}(k)\!\!> \!\!\gamma _{T,dB}+37.6{\log _{10}}\left( \frac{d_0}{d_1} \right)\right\}\right)^{K}\\
=& \left(\frac{1}{2}\right)^{K},
\label{C_bound_1_1}
\end{align}
which can be rewritten as
\begin{align}
\Pr\left\{d_0 < f(\bar \gamma_{1, dB}(1))\right\}=\left(\frac{1}{2}\right)^{K}.
\label{C_bound_1_2}
\end{align}

Thus, we have
\begin{align}
\Pr\left\{d_0 \geq f(\bar \gamma_{1, dB}(1))\right\}=1- \left(\frac{1}{2}\right)^{K}.
\label{C_bound_1_3}
\end{align}

By combining (\ref{C_bound_1}) and (\ref{C_bound_1_2}), we have (\ref{Th2_bound1}).

$\bullet$ \emph{For the case $R>d_0^e\geq f(\bar \gamma_{1, dB}(K))$}: In this case, it is straightforward to obtain
\begin{align}
0= \Pr\left\{R \leq d_0\right\}
< \Pr\left\{d_0^e \leq d_0\right\}
\leq \Pr\left\{f(\bar \gamma_{1, dB}(K)) \leq d_0 \right\}.
\label{C_bound_2}
\end{align}

We note that the probability that $\gamma_{1, dB}(K)$ is no larger than $\gamma _{T,dB}+37.6{\log _{10}}\left( \frac{d_0}{d_1} \right)$ is equal to the probability that $K$ measured SNRs in $\bar \gamma_{1, dB}(k)$ $\left(1 \leq k \leq K\right)$ are no larger than $\gamma _{T,dB}+37.6{\log _{10}}\left( \frac{d_0}{d_1} \right)$, i.e.,
\begin{align} \nonumber
&\Pr\left\{\bar \gamma_{1, dB}(K) \leq \gamma _{T,dB}+37.6{\log _{10}}\left( \frac{d_0}{d_1} \right)\right\}\\ \nonumber
=&\left(\Pr\left\{\bar \gamma_{1, dB}(k)\!\!\leq \!\!\gamma _{T,dB}+37.6{\log _{10}}\left( \frac{d_0}{d_1} \right)\right\}\right)^{K} \\
=&\left(\frac{1}{2}\right)^{K},
\label{C_bound_2_1}
\end{align}
which can be rewritten as
\begin{align}
\Pr\left\{f(\bar \gamma_{1, dB}(K)) \leq d_0 \right\}=\left(\frac{1}{2}\right)^{K}.
\label{C_bound_2_3}
\end{align}

By combining (\ref{C_bound_2}) and (\ref{C_bound_2_3}), we have (\ref{Th2_bound2}).

$\bullet$ \emph{For $f(\bar \gamma _{1,dB}(k'))\leq d_0^e < f(\bar \gamma _{1,dB}(k'+1))$, where $1\leq k' \leq K-1$}:
In this case, it is straightforward to obtain
\begin{align} \nonumber
\Pr\left\{f(\bar \gamma _{1,dB}(k'+1)) \leq d_0 \right\} &< \Pr\left\{d_0^e \leq d_0 \right\}\\
&\leq \Pr\left\{f(\bar \gamma _{1,dB}(k')) \leq d_0 \right\}.
\label{C_bound_3}
\end{align}

We note that the probability that $\bar \gamma_{1, dB}(k')$ is no larger than $\gamma _{T,dB}+37.6{\log _{10}}\left( \frac{d_0}{d_1} \right)$ is equal to the probability that at least $k'$ out of $K$ measured SNRs in $\gamma_{1, dB}(k)$ $\left(1 \leq k \leq K\right)$ are no larger than $\gamma _{T,dB}+37.6{\log _{10}}\left( \frac{d_0}{d_1} \right)$, i.e.,
\begin{align} \nonumber
&\Pr\left\{\bar \gamma_{1, dB}(k') \leq \gamma _{T,dB}+37.6{\log _{10}}\left(\frac{d_0}{d_1} \right)\right\}\\ \nonumber
=&C_K^{k'}\left(\frac{1}{2}\right)^{k'}\left(\frac{1}{2}\right)^{K-k'}+ C_K^{k'+1}\left(\frac{1}{2}\right)^{k'+1}\left(\frac{1}{2}\right)^{K-k'-1} \\ \nonumber
&+\cdots+C_K^{K}\left(\frac{1}{2}\right)^{K}\\
=&\left(\frac{1}{2}\right)^K\left(C_K^{k'}+C_K^{k'+1}+\cdots+C_K^{K}\right).
\label{C_bound_3_1}
\end{align}

Then, we have
\begin{align} \nonumber
&\Pr\left\{d_110^{\frac{\bar \gamma _{1,dB(k')-\gamma _{T,dB}}}{37.6}} \leq d_0 \right\} \\ =& \left(\frac{1}{2}\right)^K\left(C_K^{k'}+C_K^{k'+1}+\cdots+C_K^{K}\right).
\label{C_bound_3_2}
\end{align}

Similarly, we have
\begin{align}\nonumber
&\Pr\left\{d_110^{\frac{\bar \gamma _{1,dB(k'+1)-\gamma _{T,dB}}}{37.6}} \leq d_0 \right\}\\
=& \left(\frac{1}{2}\right)^K\left(C_K^{k'+1}+C_K^{k'+2}+\cdots+C_K^{K}\right).
\label{C_bound_3_3}
\end{align}

Combining (\ref{C_bound_3}), (\ref{C_bound_3_2}), and (\ref{C_bound_3_3}), we have (\ref{Th2_bound3}). Here, we complete the proof of this theorem.

\subsection{Derivation of $S_{\text{c}}$ in Scenario I }

            \begin{figure}[t!]
            \centering
            \includegraphics[scale=0.7]{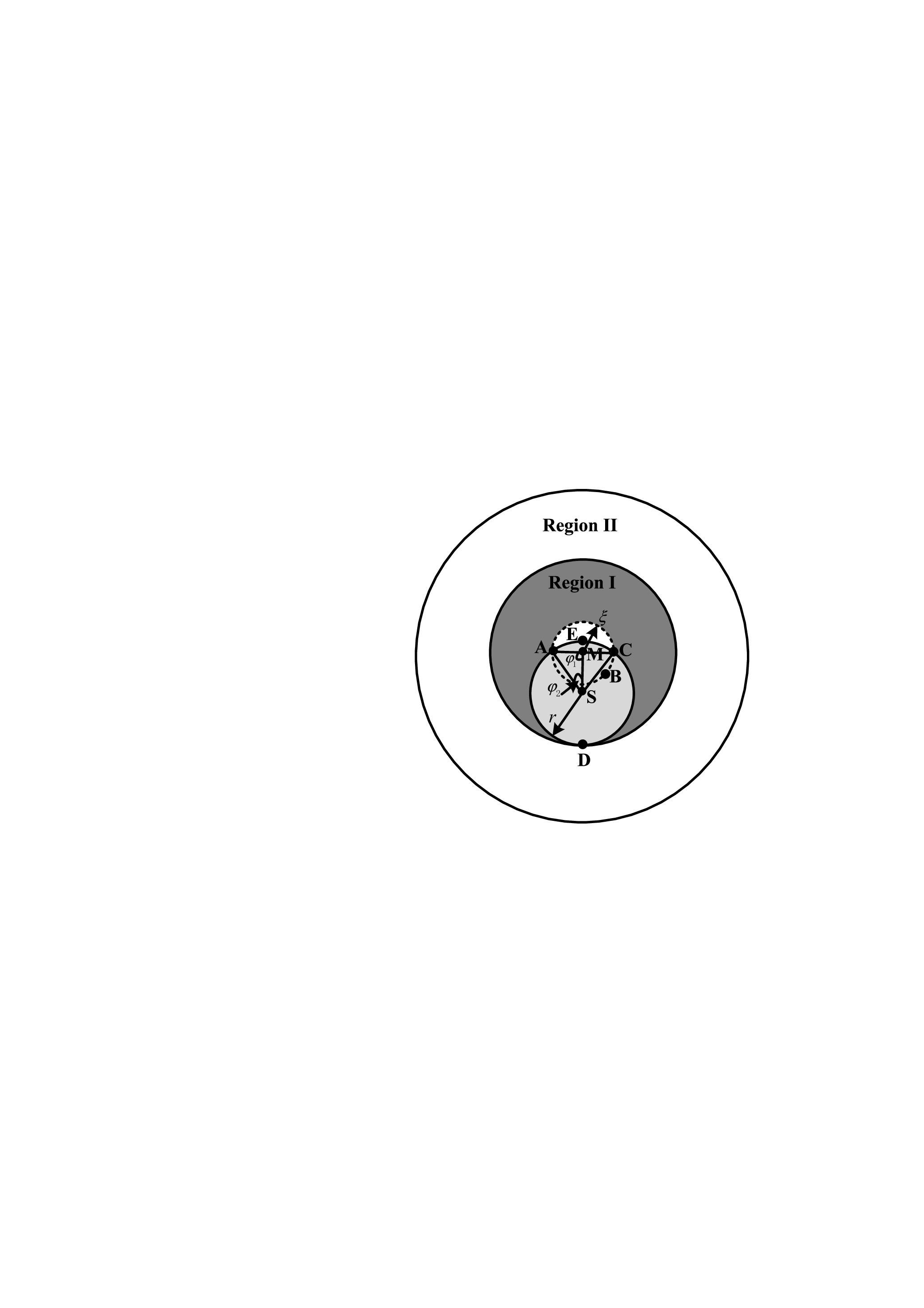}
            \caption{Geometrical model to calculate $S_{\text{c}}$ in Scenario I, i.e., $\xi \leq d_1 \leq r+\xi$. }
            \label{p_c_center}
        \end{figure}

In this part, we consider that the MU is in Scenario I and calculate the area $S_{\text{c}}$. Without loss of generality, we show the geometrical model in Fig. \ref{p_c_center}. In the sequential, we denote $\text{o}(\iota)$ as the circle region centered by point ``o'' with radius $\iota$.

When $\xi \leq d_1 \leq r-\xi$, $\text{M}(\xi)$ is in the $\text{S}(r)$. Then, we have
\begin{equation}
S_{\text{c}}=\pi r^2-\pi \xi^2.
\label{S_c_SI_1}
\end{equation}

When $r-\xi < d_1 \leq r+\xi$, $\text{M}(\xi)$ intersects with $\text{S}(r)$. We first draw auxiliary segments MA, MC, SA, SC, and MS. Then, we have MA=MC=$\xi$, SA=SC=$r$, and MS=$d_1$. Then, we denote ``B'' as a point both on the edge of $\text{M}(\xi)$ and in $\text{S}(r)$, denote ``D'' as a point on the edge of both $\text{S}(r)$ and $\text{M}(d_1+r)$, and denote ``E'' as a point both on the edge of $\text{S}(r)$ and in $\text{M}(\xi)$. Besides, we denote $\angle \text{AMS}=\varphi_1$ and $\angle \text{ASM}=\varphi_2$. According to the cosine theorem, we have
\begin{equation}
r^2=\xi^2+d_1^2-2\xi d_1\cos \varphi_1
\end{equation}
and
\begin{equation}
\xi^2=r^2+d_1^2-2rd_1\cos \varphi_2.
\end{equation}

Thus, we obtain $\varphi_1$ and $\varphi_2$ as
\begin{equation}
\varphi_1=\arccos \frac{\xi^2+d_1^2-r^2}{2\xi d_1}
\end{equation}
and
\begin{align}
\varphi_2= \arccos \frac{r^2+d_1^2-\xi^2}{2rd_1}.
\end{align}
Then, we have
 \begin{align} \nonumber
S_{\text{c}}=&S_{\text{ABCD}}\\ \nonumber
=& \pi r^2- S_{\text{ABCE}}\\ \nonumber
=& \pi r^2 -(S_{\text{ABCM}}+S_{\text{ASCE}}-S_{\text{ASCM}})\\ \nonumber
= & \pi r^2- \varphi_1 \xi^2 -\varphi_2 r^2+\xi d_1 \sin \varphi_1\\
= & (\pi-\varphi_2) r^2- \varphi_1 \xi^2 +\xi d_1 \sin \varphi_1.
\label{S_c_SI_2}
\end{align}

By combining (\ref{S_c_SI_1}) and (\ref{S_c_SI_2}), we obtain (\ref{S_c_SI}).

\subsection{AP design in Scenario II}
In this part, we analyze each case in Scenario II separately. In particular, we first calculate the probability $\rho_{\text{II}}$ that the MU is in Region II for each case and then design the corresponding AP. In the sequential, we denote $S_{\text{II}}=\pi (d_1+r)^2- \pi (d_1-r)^2$ as the area of Region II and denote $S_{\text{c}}=\pi r^2$ as the area of the interference region, where the MU may appear in the C-SBS coverage.

\subsubsection{AP Design in Case I}
In this case, $d_1-r$  satisfies (\ref{II_k_1}) and $d_1+r$  satisfies (\ref{I_k_2}). The probability $\rho_{\text{II}}$ that the MU is in Region II is
\begin{align} \nonumber
\rho_{\text{II}}=&\Pr\left\{d_1-r\leq d_0 \leq d_1+r\right\}\\ \nonumber
=& \Pr\left\{d_0> d_1-r\right\}-\Pr\left\{d_0> d_1+r\right\} \\ \nonumber
\overset{\text{(II-a)}}{\leq} & 1- 1+\left(\frac{1}{2}\right)^K\\
=& \left(\frac{1}{2}\right)^K,
\label{II_I_bound1}
\end{align}
where we use (\ref{Th2_bound1}) in (II-a).

To control the IP constraint $\eta$ at the MU, the AP of the C-SBS needs to satisfy $0\leq \rho_{\text{AP}}\leq 1$ and $\rho_{\text{AP}}\rho_{\text{II}}\frac{S_{\text{c}}}{S_{\text{II}}} \leq \eta$, i.e., $\rho_{\text{AP}} \leq \min \left\{\frac{\eta S_{\text{II}}}{\rho_{\text{II}}S_{\text{c}}},1\right\}=\min \left\{\frac{\eta \left[(d_1+r)^2-(d_1-r)^2\right]}{\rho_{\text{II}}r^2},1\right\}$, which can be lower bounder by
\begin{align} \nonumber
&\min \left\{\frac{\eta \left[(d_1+r)^2-(d_1-r)^2\right]}{\rho_{\text{II}}r^2},1\right\} \\
\overset{\text{(II-b)}}{\geq} & \min \left\{\frac{\eta2^K \left[(d_1+r)^2-(d_1-r)^2\right]}{r^2},1\right\},
\label{II_I_bound2}
\end{align}
where we use (\ref{II_I_bound1}) in (II-b).

Thus, to protect the MU in this case, the maximum AP of the C-SBS is
\begin{align}
 \rho_{\text{AP}}=\min \left\{\frac{\eta2^K \left[(d_1+r)^2-(d_1-r)^2\right]}{r^2},1\right\}.
\label{II_I_AP}
\end{align}

\subsubsection{AP Design in Case II}
In this case, $d_1-r$  satisfies (\ref{II_k_1}) and $d_1+r$  satisfies (\ref{I_k_2_1}). The probability $\rho_{\text{II}}$ that the MU is in Region II is
\begin{align} \nonumber
\rho_{\text{II}}=& \Pr\left\{d_0> d_1-r\right\}-\Pr\left\{d_0> d_1+r\right\} \\ \nonumber
\overset{\text{(II-c)}}{\leq} & 1- \left(\frac{1}{2}\right)^K\left(C_K^{k_2+1}+C_K^{k_2+2}+\cdots+C_K^{K}\right)\\
=& \left(\frac{1}{2}\right)^K\left(C_K^{0}+C_K^{1}+\cdots+C_K^{k_2}\right),
\label{II_II_bound1}
\end{align}
where we use (\ref{Th2_bound1}) and (\ref{Th2_bound3}) in (II-c).

To control the IP constraint $\eta$ at the MU, the AP of the C-SBS needs to satisfy $0\leq \rho_{\text{AP}}\leq 1$ and $\rho_{\text{AP}}\rho_{\text{II}}\frac{S_{\text{c}}}{S_{\text{II}}} \leq \eta$, i.e., $\rho_{\text{AP}} \leq \min \left\{\frac{\eta S_{\text{II}}}{\rho_{\text{II}}S_{\text{c}}},1\right\}=\min \left\{\frac{\eta \left[(d_1+r)^2-(d_1-r)^2\right]}{\rho_{\text{II}}r^2},1\right\}$, which can be lower bounder by
\begin{align} \nonumber
 & \min \left\{\frac{\eta \left[(d_1+r)^2-(d_1-r)^2\right]}{\rho_{\text{II}}r^2},1\right\}\\
\overset{\text{(II-d)}}{\geq} &  \left\{\frac{\eta 2^K \left[(d_1+r)^2-(d_1-r)^2\right]}{\left(C_K^{0}+C_K^{1}+\cdots+C_K^{k_2}\right)r^2},1\right\},
\label{II_II_bound2}
\end{align}
where we use (\ref{II_II_bound1}) in (II-d).

Thus, to protect the MU in this case, the maximum AP of the C-SBS is
\begin{align}
\rho_{\text{AP}}= \left\{\frac{\eta 2^K \left[(d_1+r)^2-(d_1-r)^2\right]}{\left(C_K^{0}+C_K^{1}+\cdots+C_K^{k_2}\right)r^2},1\right\}.
\label{II_II_AP}
\end{align}

\subsubsection{AP Design in Case III}
In this case, $d_1-r$  satisfies (\ref{II_k_1}) and $d_1+r$  satisfies (\ref{I_k_2_2}). The probability $\rho_{\text{II}}$ that the MU is in Region II is
\begin{align}
\rho_{\text{II}}=& \Pr\left\{d_0> d_1-r\right\}-\Pr\left\{d_0> d_1+r\right\}
\overset{\text{(II-e)}}{\leq} 1,
\label{II_III_bound1}
\end{align}
where we use (\ref{Th2_bound1}) and (\ref{Th2_bound2}) in (II-e).

To control the IP constraint $\eta$ at the MU, the AP of the C-SBS needs to satisfy $0\leq \rho_{\text{AP}}\leq 1$ and $\rho_{\text{AP}}\rho_{\text{II}}\frac{S_{\text{c}}}{S_{\text{II}}} \leq \eta$, i.e., $\rho_{\text{AP}} \leq \min \left\{\frac{\eta S_{\text{II}}}{\rho_{\text{II}}S_{\text{c}}},1\right\}=\min \left\{\frac{\eta \left[(d_1+r)^2-(d_1-r)^2\right]}{\rho_{\text{II}}r^2},1\right\}$, which can be lower bounder by
\begin{align} \nonumber
&\min \left\{\frac{\eta \left[(d_1+r)^2-(d_1-r)^2\right]}{\rho_{\text{II}}r^2},1\right\} \\
\overset{\text{(II-f)}}{\geq} & \min \left\{\frac{\eta \left[(d_1+r)^2-(d_1-r)^2\right]}{r^2},1\right\},
\label{II_III_bound2}
\end{align}
where we use (\ref{II_III_bound1}) in (II-f).

Thus, to protect the MU in this case, the maximum AP of the C-SBS is
\begin{align}
\rho_{\text{AP}}= \min \left\{\frac{\eta \left[(d_1+r)^2-(d_1-r)^2\right]}{r^2},1\right\}.
\label{II_III_AP}
\end{align}

\subsubsection{AP Design in Case IV}
In this case, $d_1-r$  satisfies (\ref{II_k_1_1}) and $d_1+r$  satisfies (\ref{I_k_2_1}). The probability $\rho_{\text{II}}$ that the MU is in Region II is
\begin{align} \nonumber
\rho_{\text{II}}=& \Pr\left\{d_0> d_1-r\right\}-\Pr\left\{d_0> d_1+r\right\} \\  \nonumber
\overset{\text{(II-g)}}{\leq} & \left(\frac{1}{2}\right)^K\left(C_K^{k_1}+C_K^{k_1+1}+\cdots+C_K^{K}\right)\\ \nonumber &-\left(\frac{1}{2}\right)^K\left(C_K^{k_2+1}+C_K^{k_2+2}+\cdots+C_K^{K}\right)\\
=&\left(\frac{1}{2}\right)^K\left(C_K^{k_1}+C_K^{k_1+1}+\cdots+C_K^{k_2}\right),
\label{II_IV_bound1}
\end{align}
where we use (\ref{Th2_bound3}) in (II-g).

To control the IP constraint $\eta$ at the MU, the AP of the C-SBS needs to satisfy $0\leq \rho_{\text{AP}}\leq 1$ and $\rho_{\text{AP}}\rho_{\text{II}}\frac{S_{\text{c}}}{S_{\text{II}}} \leq \eta$, i.e., $\rho_{\text{AP}} \leq \min \left\{\frac{\eta S_{\text{II}}}{\rho_{\text{II}}S_{\text{c}}},1\right\}=\min \left\{\frac{\eta \left[(d_1+r)^2-(d_1-r)^2\right]}{\rho_{\text{II}}r^2},1\right\}$, which can be lower bounder by
\begin{align} \nonumber
& \min \left\{\frac{\eta \left[(d_1+r)^2-(d_1-r)^2\right]}{\rho_{\text{II}}r^2},1\right\} \\
\overset{(\text{II-h})}{\geq} & \min \left\{\frac{\eta 2^K \left[(d_1+r)^2-(d_1-r)^2\right]}{r^2\left(C_K^{k_1}+C_K^{k_1+1}+\cdots+C_K^{k_2}\right)},1\right\},
\label{II_IV_bound2}
\end{align}
where we use (\ref{II_IV_bound1}) in (II-h).

Thus, to protect the MU in this case, the maximum AP of the C-SBS is
\begin{align}
\rho_{\text{AP}}= \min \left\{\frac{\eta 2^K \left[(d_1+r)^2-(d_1-r)^2\right]}{r^2\left(C_K^{k_1}+C_K^{k_1+1}+\cdots+C_K^{k_2}\right)},1\right\}.
\label{II_IV_AP}
\end{align}

\subsubsection{AP Design in Case V}
In this case, $d_1-r$  satisfies (\ref{II_k_1_1}) and $d_1+r$  satisfies (\ref{I_k_2_2}). The probability $\rho_{\text{II}}$ that the MU is in Region II is
\begin{align} \nonumber
\rho_{\text{II}}=& \Pr\left\{d_0> d_1-r\right\}-\Pr\left\{d_0> d_1+r\right\} \\
\overset{\text{(II-i)}}{\leq} & \left(\frac{1}{2}\right)^K\left(C_K^{k_1}+C_K^{k_1+1}+\cdots+C_K^{K}\right),
\label{II_V_bound1}
\end{align}
where we use (\ref{Th2_bound2}) and (\ref{Th2_bound3}) in (II-i).

To control the IP constraint $\eta$ at the MU, the AP of the C-SBS needs to satisfy $0\leq \rho_{\text{AP}}\leq 1$ and $\rho_{\text{AP}}\rho_{\text{II}}\frac{S_{\text{c}}}{S_{\text{II}}} \leq \eta$, i.e., $\rho_{\text{AP}} \leq \min \left\{\frac{\eta S_{\text{II}}}{\rho_{\text{II}}S_{\text{c}}},1\right\}=\min \left\{\frac{\eta \left[(d_1+r)^2-(d_1-r)^2\right]}{\rho_{\text{II}}r^2},1\right\}$, which can be lower bounder by
\begin{align} \nonumber
&\min \left\{\frac{\eta \left[(d_1+r)^2-(d_1-r)^2\right]}{\rho_{\text{II}}r^2},1\right\} \\
\overset{(\text{II-j})}{\geq} & \min \left\{\frac{\eta 2^K \left[(d_1+r)^2-(d_1-r)^2\right]}{r^2\left(C_K^{k_1}+C_K^{k_1+1}+\cdots+C_K^{K}\right)},1\right\},
\label{II_IV_bound2}
\end{align}
where we use (\ref{II_V_bound1}) in (II-j).

Thus, to protect the MU in this case, the maximum AP of the C-SBS is
\begin{align}
\rho_{\text{AP}}= \min \left\{\frac{\eta 2^K \left[(d_1+r)^2-(d_1-r)^2\right]}{r^2\left(C_K^{k_1}+C_K^{k_1+1}+\cdots+C_K^{K}\right)},1\right\}.
\label{II_IV_AP}
\end{align}

\subsubsection{AP Design in Case VI}
In this case, $d_1-r$  satisfies (\ref{II_k_1_2}) and $d_1+r$  satisfies (\ref{I_k_2_2}). The probability $\rho_{\text{II}}$ that the MU is in Region II is
\begin{align}
\rho_{\text{II}} \!=\! \Pr\left\{d_0\!>\! d_1-r\right\}\!-\!\Pr\left\{d_0\!>\! d_1+r\right\}
\overset{\text{(II-k)}}{\leq}  \left(\frac{1}{2}\right)^K,
\label{II_VI_bound1}
\end{align}
where we use (\ref{Th2_bound2}) in (II-k).

To control the IP constraint $\eta$ at the MU, the AP of the C-SBS needs to satisfy $0\leq \rho_{\text{AP}}\leq 1$ and $\rho_{\text{AP}}\rho_{\text{II}}\frac{S_{\text{c}}}{S_{\text{II}}} \leq \eta$, i.e., $\rho_{\text{AP}} \leq \min \left\{\frac{\eta S_{\text{II}}}{\rho_{\text{II}}S_{\text{c}}},1\right\}=\min \left\{\frac{\eta \left[(d_1+r)^2-(d_1-r)^2\right]}{\rho_{\text{II}}r^2},1\right\}$, which can be lower bounder by
\begin{align} \nonumber
&\min \left\{\frac{\eta \left[(d_1+r)^2-(d_1-r)^2\right]}{\rho_{\text{II}}r^2},1\right\} \\
\overset{\text{(II-l)}}{\geq} & \min \left\{\frac{\eta 2^K \left[(d_1+r)^2-(d_1-r)^2\right]}{r^2},1\right\},
\label{II_VI_bound2}
\end{align}
where we use (\ref{II_VI_bound1}) in (II-l).

Thus, to protect the MU in this case, the maximum AP of the C-SBS is
\begin{align}
\rho_{\text{AP}}=\min \left\{\frac{\eta 2^K \left[(d_1+r)^2-(d_1-r)^2\right]}{r^2},1\right\}.
\label{II_VI_AP}
\end{align}

\subsection{AP Design in Scenario III}
In this part, we analyze each case in Scenario III separately. In particular, we first calculate the probability $\rho_{\text{I}}$ that the MU is in Region II for each case and then design the corresponding AP. In the sequential, we denote $S_{\text{II}}=\pi R^2- \pi (d_1-r)^2$ as the area of Region II and denote $S_{\text{c}}$ as the area of the interference region, where the MU may appear in the coverage of the C-SBS. From Appendix F, we obtain
\begin{align}
S_{\text{c}}=\varphi_3 R^2+\varphi_4 r^2-Rd_1\sin \varphi_3,
\label{S_c_SIII}
\end{align}
where $\varphi_3=\arccos \frac{R^2+d_1^2-r^2}{2Rd_1}$ and $\varphi_4= \arccos \frac{r^2+d_1^2-R^2}{2rd_1}$.

\subsubsection{AP Design in Case I}
In this case, $d_1-r$ satisfies (\ref{II_k_1}) and $R$ satisfies (\ref{III_k_3}). The probability $\rho_{\text{II}}$ that the MU is in Region II is
\begin{align} \nonumber
\rho_{\text{II}}=&\Pr\left\{d_1-r< d_0 \leq R\right\}\\ \nonumber
=& \Pr\left\{d_0> d_1-r\right\}-\Pr\left\{d_0> R\right\} \\ \nonumber
\overset{\text{(III-a)}}{\leq} & 1- 1+\left(\frac{1}{2}\right)^K\\
=& \left(\frac{1}{2}\right)^K,
\label{III_I_bound1}
\end{align}
where we use (\ref{Th2_bound1}) in (III-a).

To control the IP constraint $\eta$ at the MU, the AP of the C-SBS needs to satisfy $0\leq \rho_{\text{AP}}\leq 1$ and $\rho_{\text{AP}}\rho_{\text{II}}\frac{S_{\text{c}}}{S_{\text{II}}} \leq \eta$, i.e., $\rho_{\text{AP}} \leq \min \left\{\frac{\eta S_{\text{II}}}{\rho_{\text{II}}S_{\text{c}}},1\right\}=\min \left\{\frac{\eta \left[\pi R^2-\pi(d_1-r)^2\right]}{\rho_{\text{II}}S_{\text{c}}},1\right\}$, which can be lower bounder by
\begin{align} \nonumber
&\min \left\{\frac{\eta \left[\pi R^2-\pi(d_1-r)^2\right]}{\rho_{\text{II}}S_{\text{c}}},1\right\} \\
\overset{\text{(III-b)}}{\geq} & \min \left\{\frac{\eta2^K \left[\pi R^2-\pi (d_1-r)^2\right]}{S_{\text{c}}},1\right\},
\label{III_I_bound2}
\end{align}
where we use (\ref{III_I_bound1}) in (III-b).

Thus, to protect the MU in this case, the maximum AP of the C-SBS is
\begin{align}
 \rho_{\text{AP}}=\min \left\{\frac{\eta2^K \left[\pi R^2-\pi(d_1-r)^2\right]}{S_{\text{c}}},1\right\},
\label{III_I_AP}
\end{align}
where $S_{\text{c}}$ is shown in (\ref{S_c_SIII}).

\subsubsection{AP Design in Case II}
In this case, $d_1-r$  satisfies (\ref{II_k_1}) and $R$ satisfies (\ref{III_k_3_1}). The probability $\rho_{\text{II}}$ that the MU is in Region II is
\begin{align} \nonumber
\rho_{\text{II}}=& \Pr\left\{d_0> d_1-r\right\}-\Pr\left\{d_0> R\right\} \\ \nonumber
\overset{\text{(III-c)}}{\leq} & 1- \left(\frac{1}{2}\right)^K\left(C_K^{k_3+1}+C_K^{k_3+2}+\cdots+C_K^{K}\right)\\
=& \left(\frac{1}{2}\right)^K\left(C_K^{0}+C_K^{1}+\cdots+C_K^{k_3}\right),
\label{III_II_bound1}
\end{align}
where we use (\ref{Th2_bound1}) and (\ref{Th2_bound3}) in (III-c).

To control the IP constraint $\eta$ at the MU, the AP of the C-SBS needs to satisfy $0\leq \rho_{\text{AP}}\leq 1$ and $\rho_{\text{AP}}\rho_{\text{II}}\frac{S_{\text{c}}}{S_{\text{II}}} \leq \eta$, i.e., $\rho_{\text{AP}} \leq \min \left\{\frac{\eta S_{\text{II}}}{\rho_{\text{II}}S_{\text{c}}},1\right\}=\min \left\{\frac{\eta \left[\pi R^2-\pi(d_1-r)^2\right]}{\rho_{\text{II}}S_{\text{c}}},1\right\}$, which can be lower bounder by
\begin{align} \nonumber
&\min \left\{\frac{\eta \left[\pi R^2-\pi(d_1-r)^2\right]}{\rho_{\text{II}}S_{\text{c}}},1\right\}\\
\overset{\text{(III-d)}}{\geq} & \left\{\frac{\eta 2^K \left[\pi R^2-\pi (d_1-r)^2\right]}{\left(C_K^{0}+C_K^{1}+\cdots+C_K^{k_3}\right)S_{\text{c}}},1\right\},
\label{III_II_bound2}
\end{align}
where we use (\ref{III_II_bound1}) in (III-d).

Thus, to protect the MU in this case, the maximum AP of the C-SBS is
\begin{align}
\rho_{\text{AP}}= \left\{\frac{\eta 2^K \left[\pi R^2-\pi(d_1-r)^2\right]}{\left(C_K^{0}+C_K^{1}+\cdots+C_K^{k_3}\right)S_{\text{c}}},1\right\},
\label{II_II_AP}
\end{align}
where $S_{\text{c}}$ is shown in (\ref{S_c_SIII}).

\subsubsection{AP Design in Case III}
In this case, $d_1-r$  satisfies (\ref{II_k_1}) and $R$ satisfies (\ref{III_k_3_2}). The probability $\rho_{\text{II}}$ that the MU is in Region II is
\begin{align}
\rho_{\text{II}}=& \Pr\left\{d_0> d_1-r\right\}-\Pr\left\{d_0> R\right\}
\overset{\text{(III-e)}}{\leq}  1,
\label{III_III_bound1}
\end{align}
where we use (\ref{Th2_bound1}) and (\ref{Th2_bound2}) in (III-e).

To control the IP constraint $\eta$ at the MU, the AP of the C-SBS needs to satisfy $0\leq \rho_{\text{AP}}\leq 1$ and $\rho_{\text{AP}}\rho_{\text{II}}\frac{S_{\text{c}}}{S_{\text{II}}} \leq \eta$, i.e., $\rho_{\text{AP}} \leq \min \left\{\frac{\eta S_{\text{II}}}{\rho_{\text{II}}S_{\text{c}}},1\right\}=\min \left\{\frac{\eta \left[\pi R^2-\pi(d_1-r)^2\right]}{\rho_{\text{II}}S_{\text{c}}},1\right\}$, which can be lower bounder by
\begin{align}\nonumber
& \min \left\{\frac{\eta \left[\pi R^2-\pi(d_1-r)^2\right]}{\rho_{\text{II}}S_{\text{c}}},1\right\}\\
 \overset{\text{(III-f)}}{\geq} &\min \left\{\frac{\eta \left[\pi R^2-\pi(d_1-r)^2\right]}{S_{\text{c}}},1\right\},
\label{III_III_bound2}
\end{align}
where we use (\ref{I_III_bound1}) in (III-f).

Thus, to protect the MU in this case, the maximum AP of the C-SBS is
\begin{align}
\rho_{\text{AP}}=\min \left\{\frac{\eta \left[\pi R^2-\pi(d_1-r)^2\right]}{S_{\text{c}}},1\right\},
\label{III_III_AP}
\end{align}
where $S_{\text{c}}$ is shown in (\ref{S_c_SIII}).

\subsubsection{AP Design in Case IV}
In this case, $d_1-r$  satisfies (\ref{II_k_1_1}) and $R$ satisfies (\ref{III_k_3_1}). The probability $\rho_{\text{II}}$ that the MU is in Region II is
\begin{align} \nonumber
\rho_{\text{II}}=& \Pr\left\{d_0> d_1-r\right\}-\Pr\left\{d_0> R\right\} \\  \nonumber
\overset{\text{(III-g)}}{\leq} & \left(\frac{1}{2}\right)^K\left(C_K^{k_1}+C_K^{k_1+1}+\cdots+C_K^{K}\right)\\ \nonumber
& -\left(\frac{1}{2}\right)^K\left(C_K^{k_3+1}+C_K^{k_3+2}+\cdots+C_K^{K}\right)\\
=&\left(\frac{1}{2}\right)^K\left(C_K^{k_1}+C_K^{k_1+1}+\cdots+C_K^{k_3}\right),
\label{III_IV_bound1}
\end{align}
where we use (\ref{Th2_bound3}) in (III-g).

To control the IP constraint $\eta$ at the MU, the AP of the C-SBS needs to satisfy $0\leq \rho_{\text{AP}}\leq 1$ and $\rho_{\text{AP}}\rho_{\text{II}}\frac{S_{\text{c}}}{S_{\text{II}}} \leq \eta$, i.e., $\rho_{\text{AP}} \leq \min \left\{\frac{\eta S_{\text{II}}}{\rho_{\text{II}}S_{\text{c}}},1\right\}=\min \left\{\frac{\eta \left[\pi R^2-\pi(d_1-r)^2\right]}{\rho_{\text{II}}S_{\text{c}}},1\right\}$, which can be lower bounder by
\begin{align} \nonumber
& \min \left\{\frac{\eta \left[\pi R^2-\pi(d_1-r)^2\right]}{\rho_{\text{II}}S_{\text{c}}},1\right\}\\
\overset{\text{(III-h)}}{\geq} & \min \left\{\frac{\eta 2^K \left[\pi R^2-\pi(d_1-r)^2\right]}{\left(C_K^{k_1}+C_K^{k_1+1}+\cdots+C_K^{k_3}\right)S_{\text{c}}},1\right\},
\label{III_IV_bound2}
\end{align}
where we use (\ref{III_IV_bound1}) in (III-h).

Thus, to protect the MU in this case, the maximum AP of the C-SBS is
\begin{align}
\rho_{\text{AP}}= \min \left\{\frac{\eta 2^K \left[\pi R^2-\pi(d_1-r)^2\right]}{\left(C_K^{k_1}+C_K^{k_1+1}+\cdots+C_K^{k_3}\right)S_{\text{c}}},1\right\},
\label{III_IV_AP}
\end{align}
where $S_{\text{c}}$ is shown in (\ref{S_c_SIII}).

\subsubsection{AP Design in Case V}
In this case, $d_1-r$  satisfies (\ref{II_k_1_1}) and $R$ satisfies (\ref{III_k_3_2}). The probability $\rho_{\text{II}}$ that the MU is in Region II is
\begin{align} \nonumber
\rho_{\text{II}}=& \Pr\left\{d_0> d_1-r\right\}-\Pr\left\{d_0> R\right\} \\
\overset{\text{(III-i)}}{\leq} & \left(\frac{1}{2}\right)^K\left(C_K^{k_1}+C_K^{k_1+1}+\cdots+C_K^{K}\right),
\label{III_V_bound1}
\end{align}
where we use (\ref{Th2_bound2}) and (\ref{Th2_bound3}) in (III-i).

To control the IP constraint $\eta$ at the MU, the AP of the C-SBS needs to satisfy $0\leq \rho_{\text{AP}}\leq 1$ and $\rho_{\text{AP}}\rho_{\text{II}}\frac{S_{\text{c}}}{S_{\text{II}}} \leq \eta$, i.e., $\rho_{\text{AP}} \leq \min \left\{\frac{\eta S_{\text{II}}}{\rho_{\text{II}}S_{\text{c}}},1\right\}=\min \left\{\frac{\eta \left[\pi R^2-\pi(d_1-r)^2\right]}{\rho_{\text{II}}S_{\text{c}}},1\right\}$, which can be lower bounder by
\begin{align} \nonumber
& \min \left\{\frac{\eta \left[\pi R^2-\pi(d_1-r)^2\right]}{\rho_{\text{II}}S_{\text{c}}},1\right\}\\
\overset{\text{(III-j)}}{\geq} & \min \left\{\frac{\eta 2^K \left[\pi R^2-\pi(d_1-r)^2\right]}{\left(C_K^{k_1}+C_K^{k_1+1}+\cdots+C_K^{K}\right)S_{\text{c}}},1\right\},
\label{III_V_bound2}
\end{align}
where we use (\ref{III_V_bound1}) in (III-j).

Thus, to protect the MU in this case, the maximum AP of the C-SBS is
\begin{align}
\rho_{\text{AP}}= \min \left\{\frac{\eta 2^K \left[\pi R^2-\pi(d_1-r)^2\right]}{\left(C_K^{k_1}+C_K^{k_1+1}+\cdots+C_K^{K}\right)S_{\text{c}}},1\right\},
\label{III_V_AP}
\end{align}
where $S_{\text{c}}$ is shown in (\ref{S_c_SIII}).

\subsubsection{AP Design in Case VI}
In this case, $d_1-r$  satisfies (\ref{II_k_1_2}) and $R$ satisfies (\ref{III_k_3_2}). The probability $\rho_{\text{II}}$ that the MU is in Region II is
\begin{align}
\rho_{\text{II}}= \Pr\left\{d_0> d_1-r\right\}-\Pr\left\{d_0> R\right\}
\overset{\text{(III-k)}}{\leq}  \left(\frac{1}{2}\right)^K,
\label{III_VI_bound1}
\end{align}
where we use (\ref{Th2_bound2}) in (III-k).

To control the IP constraint $\eta$ at the MU, the AP of the C-SBS needs to satisfy $0\leq \rho_{\text{AP}}\leq 1$ and $\rho_{\text{AP}}\rho_{\text{II}}\frac{S_{\text{c}}}{S_{\text{II}}} \leq \eta$, i.e., $\rho_{\text{AP}} \leq \min \left\{\frac{\eta S_{\text{II}}}{\rho_{\text{II}}S_{\text{c}}},1\right\}=\min \left\{\frac{\eta \left[\pi R^2-\pi(d_1-r)^2\right]}{\rho_{\text{II}}S_{\text{c}}},1\right\}$, which can be lower bounder by
\begin{align} \nonumber
&\min \left\{\frac{\eta \left[\pi R^2-\pi(d_1-r)^2\right]}{\rho_{\text{II}}S_{\text{c}}},1\right\}\\
\overset{\text{(III-l)}}{\geq} & \min \left\{\frac{\eta 2^K \left[\pi R^2-\pi (d_1-r)^2\right]}{S_{\text{c}}},1\right\},
\label{III_VI_bound2}
\end{align}
where we use (\ref{III_VI_bound1}) in (III-l).

Thus, to protect the MU in this case, the maximum AP of the C-SBS is
\begin{align}
\rho_{\text{AP}}=\min \left\{\frac{\eta 2^K \left[\pi R^2-\pi(d_1-r)^2\right]}{S_{\text{c}}},1\right\},
\label{III_VI_AP}
\end{align}
where $S_{\text{c}}$ is shown in (\ref{S_c_SIII}).

\subsection{Derivation of $S_{\text{c}}$ in Scenario III }

            \begin{figure}[t!]
            \centering
            \includegraphics[scale=0.6]{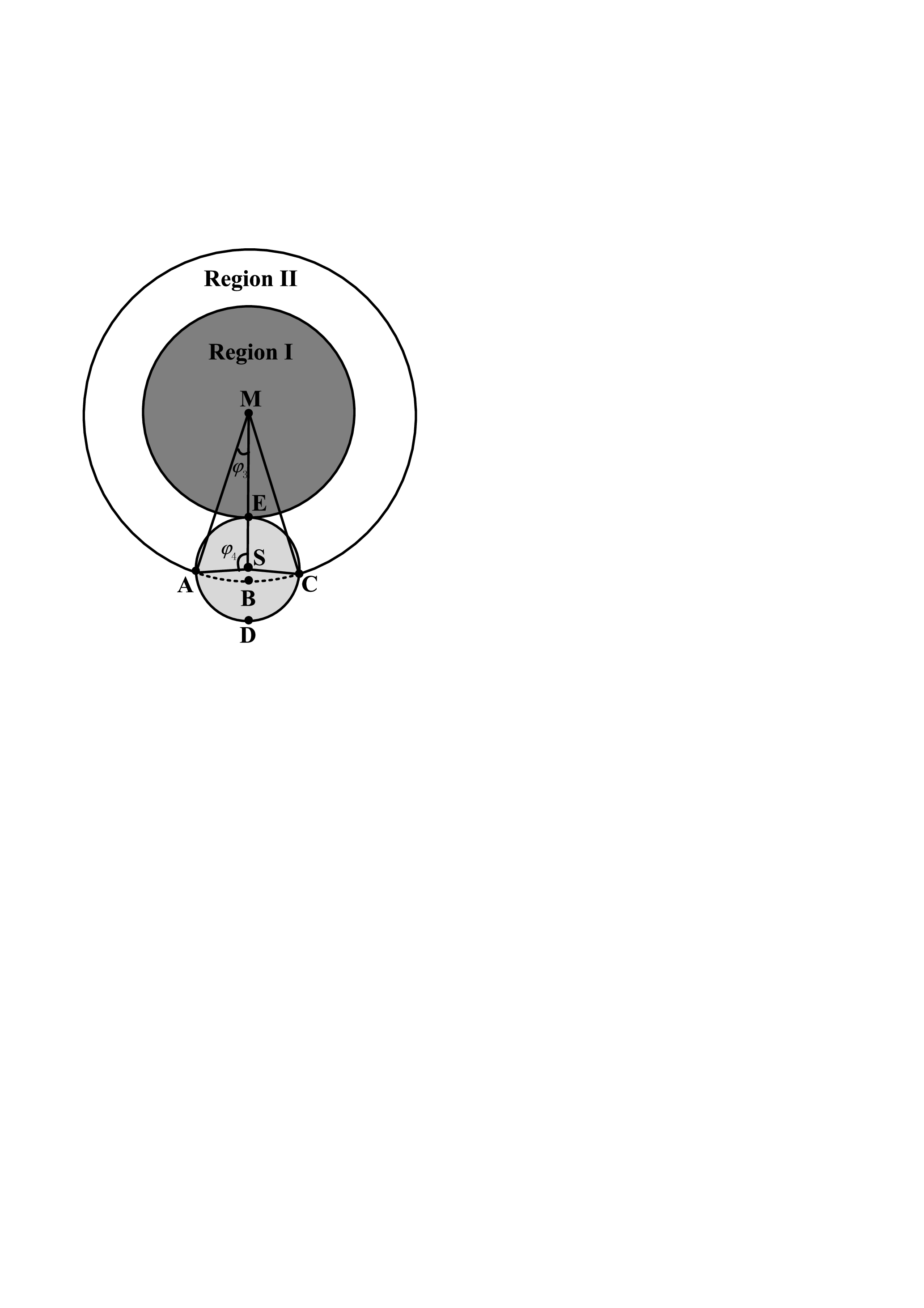}
            \caption{Geometrical model to calculate $S_{\text{c}}$ in Scenario III, i.e., $R-r\leq d_1 \leq R+r$. }
            \label{p_c_edge}
        \end{figure}

In this part, we consider that the MU is in Scenario III and calculate the area $S_{\text{c}}$. Without loss of generality, we shown the geometrical model in Fig. \ref{p_c_edge}. To begin with, we draw auxiliary segments MA, MC, SA, SC, and MS. Then, we have MA=MC=$R$, SA=SC=$r$, and MS=$d_1$. Then, we denote ``B'' as a point both on the edge of $\text{M}(R)$ and in $\text{S}(r)$, denote ``D'' as a point which is on the edge of $\text{S}(r)$ and outside of $\text{M}(R)$, and denote ``E'' as the intersection between $\text{S}(r)$ and $\text{M}(d_1-r)$. Besides, we denote $\angle \text{AMS}=\varphi_3$ and $\angle \text{ASM}=\varphi_4$. According to the cosine theorem, we have
\begin{equation}
r^2=R^2+d_1^2-2Rd_1\cos \varphi_3
\end{equation}
and
\begin{equation}
R^2=r^2+d_1^2-2rd_1\cos \varphi_4.
\end{equation}

Thus, we obtain $\varphi_3$ and $\varphi_4$ as
\begin{equation}
\varphi_3=\arccos \frac{R^2+d_1^2-r^2}{2Rd_1}
\end{equation}
and
\begin{align}
\varphi_4= \arccos \frac{r^2+d_1^2-R^2}{2rd_1}.
\end{align}
Then, we have
 \begin{equation}
S_{\text{c}}=S_{\text{ABCE}}=S_{\text{ABCM}}+S_{\text{ASCE}}-S_{\text{ASCM}},
\end{equation}
 where $S_{\text{ABCM}}=\varphi_3 R^2$, $S_{\text{ASCE}}=\varphi_4 r^2$, and $S_{\text{ASCM}}=Rd_1 \sin \varphi_3$.

Thus, we calculate $S_{\text{c}}$ in this scenario as (\ref{S_c_SIII}).

\end{document}